\documentclass[hidelinks,11pt]{article}

\usepackage{amsmath}
\usepackage{amssymb}
\usepackage{amsfonts,amsthm}
\usepackage{thmtools,thm-restate}
\usepackage{tabularx,lipsum,environ}
\usepackage{multirow}
\usepackage{hyperref}
\usepackage{graphicx,float}
\usepackage{enumitem,linegoal}
\usepackage{caption}
\usepackage{subcaption}
\usepackage{complexity}

\usepackage{enumitem}

\setlist[itemize]{noitemsep}
\setlist[enumerate]{noitemsep}

\captionsetup{font=small, labelfont={bf}}

\usepackage{authblk}

\usepackage{boxedminipage,tikz}

\newtheorem{theorem}{Theorem}[section]
\newtheorem{lemma}[theorem]{Lemma}

\newtheorem{corollary}[theorem]{Corollary}
\newtheorem{Claim}{Claim}
\newtheorem{fact}[theorem]{Fact}

\newtheorem{observation}{Observation}

\theoremstyle{definition}

\newenvironment{claimproof}{\begin{proof}\renewcommand{\qedsymbol}{\claimqed}}{\end{proof}\renewcommand{\qedsymbol}{\plainqed}}
\let\plainqed\qedsymbol

\usetikzlibrary{arrows}
\usetikzlibrary{arrows.meta}
\usetikzlibrary{decorations.pathmorphing}
\usetikzlibrary{decorations.markings}
\usetikzlibrary{calc}
\tikzset{
  circ/.style = {circle,draw,fill,inner sep=1.3pt},
  circr/.style = {circle,draw=red,fill=red,inner sep=1.3pt},
  scirc/.style = {circle,draw,fill,inner sep=.5pt},
  invisible/.style = {draw=none,inner sep=0pt,font=\tiny},
  nonedge/.style={decorate,decoration={snake,amplitude=.3mm,segment length=1mm},draw}
}

\usetikzlibrary{arrows.meta}
\tikzset{
  circ/.style = {circle,draw,fill,inner sep=1.3pt},
  invisible/.style = {draw=none,inner sep=0pt,font=\tiny}
}

\usepackage[a4paper,
            top=1.1in, bottom=1.4in, left=1.1in, right=1.1in,
            footskip=.5in]{geometry}

\newcommand\yes{\textsc{Yes}}
\newcommand\no{\textsc{No}}

\newcommand\dom{\textsc{Dominating Set}}
\newcommand\contracd{\textsc{1-Edge Contraction($\gamma$)}}

\newcommand\kcontracd{\textsc{$k$-Edge Contraction($\gamma$)}}

\title{Blocking dominating sets for $H$-free graphs via edge contractions}

\author[1]{Esther Galby}
\author[2]{Paloma T. Lima}
\author[1]{Bernard Ries}

\affil[1]{University of Fribourg\\
	Fribourg, Switzerland}

\affil[2]{University of Bergen\\
	Bergen, Norway}

\date{}

\pagestyle{plain}

\begin{document}

\maketitle

\begin{abstract}
In this paper, we consider the following problem: given a connected graph $G$, can we reduce the domination number of $G$ by one by using only one edge contraction? We show that the problem is $\mathsf{NP}$-hard when restricted to $\{P_6,P_4+P_2\}$-free graphs and that it is $\mathsf{coNP}$-hard when restricted to subcubic claw-free graphs and $2P_3$-free graphs. As a consequence, we are able to establish a complexity dichotomy for the problem on $H$-free graphs when $H$ is connected. 
\end{abstract}

\section{Introduction}
\label{s-intro}

A {\it blocker problem} asks whether given a graph $G$, a graph parameter $\pi$, a set $\mathcal{O}$ of one or more graph operations and an integer $k \geq 1$, $G$ can be transformed into a graph $G'$ by using at most $k$ operations from $\mathcal{O}$ such that $\pi(G') \leq \pi(G) - d$ for some {\it threshold} $d \geq 0$. Such a designation follows from the fact that the set of vertices or edges involved can be viewed as ''blocking'' the parameter $\pi$. Identifying such sets may provide information on the structure of the input graph; for instance, if $\pi = \alpha$, $k=d=1$ and $\mathcal{O} = \{$vertex deletion$\}$, the problem is equivalent to testing whether the input graph contains a vertex that is in every maximum independent set (see \cite{paulusma2017blocking}). Blocker problems have received much attention in the recent literature (see for instance \cite{BTT11,bazgan2013critical,RBPDCZ10,Bentz,CWP11,DPPR15,diner2018contraction,contracdom,keller2018blockers,keller2013blockers,PBP,nasirian2019exact,pajouh2015minimum,PPR16,paulusma2017blocking,paulusma2018critical}) and have been related to other well-known graph problems such as \textsc{Hadwiger Number}, \textsc{Club Contraction} and several graph transversal problems (see for instance \cite{DPPR15,PPR16}). The graph parameters considered so far in the literature are the chromatic number, the independence number, the clique number, the matching number and the vertex cover number while the set $\mathcal{O}$ is a singleton consisting of a vertex deletion, edge contraction, edge deletion or edge addition. In this paper, we focus on the domination number $\gamma$, let $\mathcal{O}$ consists of an edge contraction and set the threshold $d$ to one.

Formally, let $G=(V,E)$ be a graph. The {\it contraction} of an edge $uv\in E$ removes vertices $u$ and $v$ from $G$ and replaces them by a new vertex that is made adjacent to precisely those vertices that were adjacent to $u$ or $v$ in $G$ (without introducing self-loops nor multiple edges). We say that a graph $G$ can be \textit{$k$-contracted} into a graph~$G'$, if $G$ can be transformed into $G'$ by a sequence of at most~$k$ edge contractions, for an integer $k\geq 1$. The problem we consider is then the following (note that contracting an edge cannot increase the domination number).

\begin{center}
\begin{boxedminipage}{.99\textwidth}
$k$-\textsc{Edge Contraction($\gamma$)}\\
\begin{tabular}{ r p{0.8\textwidth}}
\textit{~~~~Instance:} &A connected graph $G=(V,E)$.\\
\textit{Question:} &Can $G$ be $k$-contracted into a graph $G'$ such that $\gamma(G')~\leq~\gamma(G) -1$?
\end{tabular}
\end{boxedminipage}
\end{center}

Reducing the domination number using edge contractions was first considered in \cite{HX10}. The authors proved that for a connected graph $G$ such that $\gamma(G)\geq 2$, we have $ct_\gamma (G)\leq 3$, where $ct_{\gamma}(G)$ denotes the minimum number of edge contractions required to transform $G$ into a graph $G'$ such that $\gamma (G') \leq \gamma (G) -1$ (note that if $\gamma(G) = 1$ then $G$ is a \no-instance for \kcontracd{} independently of the value of $k$). Thus, if $G$ is a connected graph with $\gamma (G) \geq 2$, then $G$ is always a \yes-instance for \kcontracd{} when $k \geq 3$. It was later shown in \cite{contracdom} that \kcontracd{} is $\mathsf{coNP}$-hard for $k \leq 2$ and so, restrictions on the input graph to some special graph classes were considered. In particular, the authors in \cite{contracdom} proved that for $k=1,2$, the problem is polynomial-time solvable for $P_5$-free graphs while for $k=1$, it remains $\mathsf{NP}$-hard when restricted to $P_9$-free graphs and $\{C_3,\ldots,C_\ell\}$-free graphs, for any $\ell \geq 3$. 

In this paper, we continue the systematic study of the computational complexity of \contracd{} initiated in \cite{contracdom}. Ultimately, the aim is to obtain a complete classification for \contracd{} restricted to $H$-free graphs, for any (not necessarily connected) graph $H$, as it has been done for other blocker problems (see for instance \cite{diner2018contraction,paulusma2017blocking,paulusma2018critical}). As a step towards this end, we prove the following three theorems.

\begin{theorem}
\label{thm:p6free}
\contracd{} is $\mathsf{NP}$-hard when restricted to $\{P_6,P_4+P_2\}$-free graphs.
\end{theorem}

\begin{theorem}
\label{thm:clawfree}
\contracd{} is $\mathsf{coNP}$-hard when restricted to subcubic claw-free graphs.
\end{theorem}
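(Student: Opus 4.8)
The plan is to reduce \contracd{} to a purely static statement about minimum dominating sets, and then reduce a satisfiability problem to that statement. First I would prove the following handle on the problem: for a connected graph $G$ with $\gamma(G)\ge 2$, some single edge contraction decreases $\gamma$ if and only if $G$ has a minimum dominating set that is not independent. The forward direction is immediate: if $D$ is a minimum dominating set containing two adjacent vertices $a,b$, then contracting $ab$ and replacing $a,b$ by the new vertex $w$ yields a dominating set $(D\setminus\{a,b\})\cup\{w\}$ of size $\gamma(G)-1$. For the converse, take a minimum dominating set $D'$ of $G/uv$ of size $\gamma(G)-1$ and let $w$ be the contracted vertex. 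If $w\in D'$ then $(D'\setminus\{w\})\cup\{u,v\}$ is a minimum dominating set of $G$ containing the adjacent pair $u,v$. If $w\notin D'$ then $D'$ fails to dominate exactly one of $u,v$ (dominating both would make $D'$ dominate $G$, contradicting $|D'|=\gamma(G)-1$); say $v$ is undominated, so that some $x\in D'$ is adjacent to $u$, and then $D'\cup\{u\}$ is a minimum dominating set in which $u$ has the neighbour $x$. Either way $G$ has a non-independent minimum dominating set. Thus the \no-instances of \contracd{} are precisely the graphs all of whose minimum dominating sets are independent.

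Given this characterization, I would establish $\mathsf{coNP}$-hardness by constructing, from a Boolean formula $\varphi$, a subcubic claw-free graph $G_\varphi$ such that $\varphi$ is satisfiable if and only if every minimum dominating set of $G_\varphi$ is independent (equivalently, $\varphi$ is unsatisfiable iff $G_\varphi$ admits a non-independent minimum dominating set). Since satisfiability is $\mathsf{NP}$-complete---and to respect the degree bound I would start from a restricted variant in which each variable and each clause has a bounded number of occurrences---this reduces \textsc{Sat} to the complement of \contracd{} and proves the theorem. The encoding would be built almost entirely from triangles, the natural claw-free building block: a degree-$3$ vertex is introduced by attaching its third edge to a vertex of a triangle, so that two of its three neighbours stay adjacent and no induced $K_{1,3}$ is centred there. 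A variable gadget would offer a binary choice between two \emph{independent} ways of dominating it (encoding \yes/\no), a clause gadget would be dominated ``for free'' exactly when one of its literals is set to true, and the gadgets would be linked by edges and short paths keeping the global maximum degree at most three.

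The correctness proof splits into the usual two implications. I would first exhibit, for any satisfying assignment, an independent dominating set of a target size $t$, and show $\gamma(G_\varphi)=t$ by a matching lower bound counting the domination demand of the gadgets; I would then argue that in the satisfiable regime \emph{every} dominating set of size $t$ comes from an assignment and is therefore independent. Conversely, I would design the clause gadgets so that an unsatisfied clause can be dominated within the budget $t$ only by using an adjacent pair of vertices (a local ``repair''), so that unsatisfiability forces a non-independent minimum dominating set while preserving $\gamma(G_\varphi)=t$. Finally I would verify, gadget by gadget, that $G_\varphi$ is connected, subcubic, and claw-free.

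The hard part is not the characterization but meeting the two structural constraints \emph{simultaneously} with full control of the minimum dominating sets. Claw-freeness forbids the low-degree ``selector'' vertices that domination reductions usually rely on, and the subcubic bound leaves almost no slack at each vertex, so the gadgets are forced into a triangle-and-path form that is awkward to wire together. The genuinely delicate point is the completeness half of the correctness argument: proving that, when $\varphi$ is satisfiable, there is \emph{no} minimum dominating set containing an adjacent pair. This demands a complete description of all dominating sets meeting the bound $t$, and it is here that essentially all of the technical effort will lie.
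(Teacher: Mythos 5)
Your reduction of \contracd{} to the static question ``does $G$ have a non-independent minimum dominating set?'' is correct, and it matches the paper's starting point (the paper imports this characterization from prior work rather than reproving it, but your argument for both directions is sound). The problem is that everything after that point is a plan rather than a proof. You describe what the variable and clause gadgets \emph{should} achieve --- a binary independent choice, a clause dominated ``for free'' exactly when satisfied, an adjacent-pair ``repair'' forced by an unsatisfied clause --- but you give no concrete gadgets, no target value $t$, no matching lower bound, and no analysis of which size-$t$ dominating sets can occur. You yourself flag the completeness direction (satisfiable $\Rightarrow$ \emph{every} minimum dominating set is independent) as the place where ``essentially all of the technical effort will lie''; that effort is the theorem, and it is absent. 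As written, the proposal establishes only the known equivalence with \textsc{All Independent MD} and does not prove $\mathsf{coNP}$-hardness on subcubic claw-free graphs.

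For comparison, the paper does not build a claw-free construction in one shot. It first reduces \textsc{Positive Exactly 3-Bounded 1-In-3 3-Sat} to the question of whether every minimum dominating set of a subcubic (but not yet claw-free) graph is \emph{efficient}, using a $C_9$ variable gadget and a clause gadget containing a small clique; the ``exactly one true literal'' condition is what forces every vertex to be dominated exactly once in the satisfiable regime. It then applies a second, purely local transformation --- replacing each degree-$3$ vertex by a triangle-of-paths gadget and each degree-$2$ vertex by a subdivided path gadget --- which makes the graph claw-free and subcubic while translating ``every minimum dominating set is efficient'' into ``every minimum dominating set is independent''; verifying that this translation is faithful takes a long case analysis over where an adjacent pair of a hypothetical non-independent minimum dominating set could sit. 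If you pursue your one-stage construction you will likely rediscover why this intermediate efficiency step is useful: controlling the independence of \emph{all} minimum dominating sets directly inside a subcubic claw-free graph is exactly the part your outline leaves open.
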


\begin{theorem}
\label{thm:2p3free}
\contracd{} is $\mathsf{coNP}$-hard when restricted to $2P_3$-free graphs.
\end{theorem}

Theorems~\ref{thm:p6free} and~\ref{thm:clawfree} lead to a complexity dichotomy for $H$-free graphs when $H$ is connected. Indeed, since \contracd{} is $\mathsf{NP}$-hard when restricted to $\{C_3,\ldots,C_\ell\}$-free graphs, for any $\ell \geq 3$, it follows that \contracd{} is $\mathsf{NP}$-hard for $H$-free graphs when $H$ contains a cycle. If $H$ is a tree with a vertex of degree at least three, we conclude by Theorem \ref{thm:clawfree} that \contracd{} is $\mathsf{coNP}$-hard for $H$-free graphs; Theorem~\ref{thm:p6free} shows that if $H$ is a path of length at least 6, then \contracd{} is $\mathsf{NP}$-hard for $H$-free graphs. Finally, since in~\cite{contracdom} \contracd{} is shown to be polynomial-time solvable on $\{P_5+pK_1\}$-free graphs for any $p\geq 0$, then \contracd{} is polynomial-time solvable on $H$-free graphs if $H\subseteq_i P_5$. We therefore obtain the following result.

\begin{corollary}
Let $H$ be a connected graph.  If $H\subseteq_i P_5$ then \contracd{} is polynomial-time solvable on $H$-free graphs, otherwise it is $\mathsf{NP}$-hard or $\mathsf{coNP}$-hard.
\end{corollary}

If the graph $H$ is not required to be connected, we know the following. As previously mentioned, \contracd{} is $\mathsf{NP}$-hard (resp.\ $\mathsf{coNP}$-hard) on $H$-free graphs when $H$ contains a cycle (resp.\ an induced claw). Thus, there remains to consider the case where $H$ is a linear forest. Theorems~\ref{thm:p6free} and~\ref{thm:2p3free} show that if $H$ contains either a $P_6$, a $P_4+P_2$ or a $2P_3$ as an induced subgraph, then \contracd{} is $\mathsf{NP}$-hard or $\mathsf{coNP}$-hard on $H$-free graphs. Since it is known that \contracd{} is polynomial-time solvable on $H$-free graphs if $H\subseteq_i P_5+pK_1$, there remains to determine the complexity status of the problem restricted to $H$-free graphs when $H=P_3+qP_2+pK_1$, for $q\geq 1$ and $p\geq 0$.

%==============================================================================================================================

\section{Preliminaries}
\label{s-pre}

Throughout the paper, we only consider finite, undirected, connected graphs that have no self-loops or multiple edges. We refer the reader to~\cite{Di05} for any terminology and notation not defined here.

For $n\geq 1$, the path and cycle on $n$ vertices are denoted by $P_n$ and $C_n$ respectively. The \textit{claw} is the complete bipartite graph with one partition of size one and the other of size three.

Let $G=(V,E)$ be a graph and let $u\in V$. We denote by $N_G(u)$, or simply $N(u)$ if it is clear from the context, the set of vertices that are adjacent to $u$ i.e., the {\it neighbors} of $u$, and let $N[u]=N(u)\cup \{u\}$. The \textit{degree} of a vertex $u$, denoted by $d_G(u)$ or simply $d(u)$ if it is clear from the context, is the size of its neighborhood i.e., $d(u) = \vert N(u) \vert$. The maximum degree in $G$ is denoted by $\Delta (G)$ and $G$ is \textit{subcubic} if $\Delta (G) \leq 3$. 

For a family $\{H_1,\ldots,H_p\}$ of graphs, $G$ is said to be {\it $\{H_1,\ldots,H_p\}$-free} if $G$ has no induced subgraph isomorphic to a graph in $\{H_1,\ldots,H_p\}$; if $p=1$ we may write $H_1$-free instead of $\{H_1\}$-free.  For a subset $V'\subseteq V$, we let $G[V']$ denote the subgraph of $G$ {\it induced} by $V'$, which has vertex set~$V'$ and edge set $\{uv\in E\; |\; u,v\in V'\}$.

A subset $S \subseteq V$ is called an {\it independent set} or is said to be \textit{independent}, if no two vertices in $S$ are adjacent. A subset $D\subseteq V$ is called a {\it dominating set}, if every vertex in $V\setminus D$ is adjacent to at least one vertex in $D$; the {\it domination number} $\gamma(G)$ is the number of vertices in a minimum dominating set. For any $v \in D$ and $u \in N[v]$, $v$ is said to \textit{dominate} $u$ (in particular, $v$ dominates itself). We say that \textit{$D$ contains an edge} (or more) if the graph $G[D]$ contains an edge (or more). A dominating set $D$ of $G$ is \textit{efficient} if for every vertex $v \in V$, $\vert N[v] \cap D \vert = 1$ that is, $v$ is dominated by exactly one vertex.

In the following, we consider those graphs for which one contraction suffices to decrease their domination number by one. A characterization of this class is given in \cite{HX10}.

\begin{theorem}[\cite{HX10}]
\label{theorem:contracdom}
For a connected graph $G$, $ct_\gamma (G)=1$ if and only if there exists a minimum dominating set in $G$ that is not independent.
\end{theorem}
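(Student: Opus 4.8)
The statement is an equivalence, so the plan is to handle the two implications separately, each by an explicit transfer of dominating sets across the contraction.

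For the backward direction, I would start from a minimum dominating set $D$ that is not independent, which by definition contains an edge $uv$. Contracting $uv$ produces a graph $G'$ with a new vertex $w$ adjacent to every former neighbour of $u$ or $v$. I would then claim that $(D \setminus \{u,v\}) \cup \{w\}$ dominates $G'$: any vertex of $G'$ that was dominated in $G$ by $u$ or $v$ becomes adjacent to $w$, while any vertex dominated by some $d \in D \setminus \{u,v\}$ keeps that domination since contraction preserves the relevant adjacencies. This set has size $|D| - 1 = \gamma(G) - 1$, so $\gamma(G') \le \gamma(G) - 1$ and one contraction suffices; hence $ct_\gamma (G) = 1$ (a single contraction is also necessary, since no contraction can ever lower the domination number).

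For the forward direction, I would take an edge $uv$ whose contraction yields $G'$ with $\gamma(G') \le \gamma(G) - 1$, fix a minimum dominating set $D'$ of $G'$, and lift it back to $G$, distinguishing whether the contracted vertex $w$ lies in $D'$. If $w \in D'$, I would set $D = (D' \setminus \{w\}) \cup \{u,v\}$; the same adjacency bookkeeping as above shows $D$ dominates $G$, its size is $|D'| + 1 \le \gamma(G)$ (hence exactly $\gamma(G)$, so $D$ is minimum), and it contains the edge $uv$, so it is not independent. If $w \notin D'$, then $D'$ is a set of vertices of $G$ that dominates every vertex other than $u,v$; moreover, since $w$ is dominated in $G'$, some $d^* \in D'$ is adjacent in $G$ to $u$ or $v$, say to $u$. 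Here lies the only real subtlety: $D'$ by itself need not dominate the other endpoint, so I would add $u$ and take $D = D' \cup \{u\}$, which dominates $v$ because $uv$ is an edge.

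The crux is this last case. One must observe that $D'$ cannot already dominate all of $G$ (that would give a dominating set of size $|D'| \le \gamma(G)-1 < \gamma(G)$, a contradiction), so the added vertex is genuinely needed; this in turn forces $|D| = |D'|+1$ to equal $\gamma(G)$, making $D$ minimum. Finally $D$ is non-independent because $d^*$ and $u$ both lie in $D$ and are adjacent. I expect the main obstacle to be organizing this case analysis cleanly — in particular pinning down that exactly one endpoint of $uv$ is left undominated by $D'$ so that adding its neighbour both restores domination and creates the required edge inside the set.
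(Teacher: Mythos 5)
Your argument is correct. Note that the paper itself gives no proof of this statement---it is imported verbatim from \cite{HX10}---so there is nothing internal to compare against; your two-directional transfer of dominating sets across the contraction (replace the adjacent pair $u,v$ by the contracted vertex $w$ for one direction; lift $D'$ back by splitting $w$ into $\{u,v\}$ or, when $w\notin D'$, by adding the endpoint adjacent to the dominator $d^{*}$ of $w$ for the other) is exactly the standard proof of this characterization. The only cosmetic point: in your final case the minimality of $D=D'\cup\{u\}$ follows directly from $|D|=|D'|+1\le\gamma(G)$ together with $D$ being dominating, so the worry about whether one or both endpoints of $uv$ are undominated by $D'$ is immaterial.
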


In order to prove Theorems \ref{thm:clawfree} and \ref{thm:2p3free}, we introduce to two following problems.

\begin{center}
\begin{boxedminipage}{.99\textwidth}
\textsc{\sc All Efficient MD}\\[2pt]
\begin{tabular}{ r p{0.8\textwidth}}
\textit{~~~~Instance:} &A connected graph $G=(V,E)$.\\
\textit{Question:} &Is every minimum dominating set of $G$ efficient?
\end{tabular}
\end{boxedminipage}
\end{center}

\begin{center}
\begin{boxedminipage}{.99\textwidth}
\textsc{\sc All Independent MD}\\[2pt]
\begin{tabular}{ r p{0.8\textwidth}}
\textit{~~~~Instance:} &A connected graph $G=(V,E)$.\\
\textit{Question:} &Is every minimum dominating set of $G$ independent?
\end{tabular}
\end{boxedminipage}
\end{center}

The following is then a straightforward consequence of Theorem \ref{theorem:contracdom}.

\begin{fact}
\label{obs:equi}
Given a graph $G$, $G$ is a \yes-instance for \contracd{} if and only if $G$ is a \no-instance for {\sc All Independent MD}.
\end{fact}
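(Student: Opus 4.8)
The plan is to establish the statement as a direct chain of equivalences, using Theorem~\ref{theorem:contracdom} as the bridge between the contraction problem and the combinatorial structure of minimum dominating sets. First I would unfold the definition of \contracd{}: since this is precisely the problem $1$-\textsc{Edge Contraction}$(\gamma)$, a graph $G$ is a \yes-instance exactly when $G$ can be transformed by \emph{at most one} edge contraction into a graph $G'$ with $\gamma(G') \leq \gamma(G) - 1$. The empty sequence of contractions leaves $G$ unchanged and so cannot decrease $\gamma$; hence being a \yes-instance is equivalent to the existence of a single edge whose contraction strictly decreases the domination number, which is exactly the condition $ct_\gamma(G) = 1$.

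Next I would invoke Theorem~\ref{theorem:contracdom}, which characterizes $ct_\gamma(G) = 1$ by the existence of a minimum dominating set of $G$ that is not independent. Finally, I would unfold the definition of {\sc All Independent MD}: $G$ is a \no-instance exactly when the answer to ``is every minimum dominating set of $G$ independent?'' is negative, that is, when there exists a minimum dominating set of $G$ that is not independent. Concatenating these three equivalences yields the claim.

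There is essentially no technical obstacle here; the entire content lies in lining up the definitions correctly, which is why the statement is a straightforward consequence of Theorem~\ref{theorem:contracdom}. The only point warranting a sentence of care is the first reduction, where I would note explicitly that ``$k$-contracted'' permits \emph{at most} $k$ contractions, so the zero-contraction case must be excluded (which it is, trivially, since $\gamma(G) \leq \gamma(G)-1$ is impossible). The degenerate case $\gamma(G) = 1$ is automatically consistent with both sides: a single-vertex dominating set is independent, so no minimum dominating set fails to be independent, making $G$ simultaneously a \no-instance for \contracd{} and a \yes-instance for {\sc All Independent MD}.
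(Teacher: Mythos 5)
Your proposal is correct and follows exactly the route the paper intends: the paper states this fact without proof as a ``straightforward consequence'' of Theorem~\ref{theorem:contracdom}, and your chain of equivalences (yes-instance $\Leftrightarrow$ $ct_\gamma(G)=1$ $\Leftrightarrow$ some minimum dominating set is not independent $\Leftrightarrow$ no-instance for {\sc All Independent MD}) is precisely the intended unfolding of definitions, including the correct handling of the zero-contraction and $\gamma(G)=1$ cases.
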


%================================================================================

\section{The proof of Theorem~\ref{thm:p6free}}

In this section, we show that \contracd{} is $\mathsf{NP}$-hard when restricted to $\{P_6,P_4+P_2\}$-free graphs. \\

To this end, we give a reduction from \dom. Given an instance $(G,\ell)$ for \dom{}, we construct an instance $G'$ for \contracd{} as follows. We denote by $\{v_1,\ldots,v_n\}$ the vertex set of $G$. The vertex set of the graph $G'$ is given by $V(G')=V_0\cup\ldots\cup V_\ell\cup\{x_0,\ldots,x_\ell,y\}$, where each $V_i$ is a copy of the vertex set of $G$. We denote the vertices of $V_i$ by $\{v^1_i,v^2_i,\ldots,v^n_i\}$. The adjacencies in $G'$ are defined as follows:
\begin{itemize}
\item[-] $V_0\cup\{x_0\}$ is a clique;
\item[-] $yx_0\in E(G')$;
\end{itemize}

 For $1\leq i\leq \ell$:
\begin{itemize}
\item[-] $V_i$ is an independent set;
\item[-] $x_i$ is adjacent to all the vertices of $V_0\cup V_i$;
\item[-] $v^j_i$ is adjacent to $\{v^a_0~|~v_a\in N_G[v_j]\}$.
\end{itemize}

\begin{figure}[htb]
\centering
\begin{tikzpicture}[scale=.5]
\draw (0,0) ellipse (2cm and .8cm);
\node[draw=none] at (0,0) {$V_0$};

\node[circ,label=above left:{\small $x_1$}] (x1) at (2.21,1.27) {};
\draw[rotate=37] (4.5,0) ellipse (1cm and .5cm);
\node[draw=none] at (3.59,2.71) {$V_1$};
\draw[very thick] (x1) -- (1.99,.07)
(x1) -- (.17,.8)
(x1) -- (2.74,2.34)
(x1) -- (3.18,1.99);

\node[circ,label=below left:{\small $x_2$}] (x2) at (2.21,-1.27) {};
\draw[rotate=-37] (4.5,0) ellipse (1cm and .5cm);
\node[draw=none] at (3.59,-2.71) {$V_2$};
\draw[very thick] (x2) -- (1.99,-.07)
(x2) -- (.17,-.8)
(x2) -- (2.74,-2.34)
(x2) -- (3.18,-1.99);

\node[draw=none] at (0,-1.8) {$\ldots$};

\node[circ,label=below right:{\small $x_{\ell}$}] (xl) at (-2.21,-1.27) {};
\draw[rotate=37] (-4.5,0) ellipse (1cm and .5cm);
\node[draw=none] at (-3.59,-2.71) {$V_{\ell}$};
\draw[very thick] (xl) -- (-1.99,-.07)
(xl) -- (-.17,-.8)
(xl) -- (-2.74,-2.34)
(xl) -- (-3.18,-1.99);

\node[circ,label=above right:{\small $x_0$}] (xl1) at (-2.21,1.27) {};
\node[circ,label=left:{\small $y$}] (y) at (-3,2) {};
\draw[very thick] (xl1) -- (-1.99,.07)
(xl1) -- (-.17,.8);
\draw[-] (xl1) -- (y);
\end{tikzpicture}
\caption{The graph $G'$ (thick lines indicate that the vertex $x_i$ is adjacent to every vertex in $V_0$ and $V_i$, for $i=0,\ldots,\ell$).}
\label{fig:dom1ec}
\end{figure}
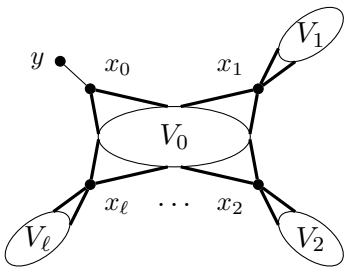

\begin{Claim}
\label{claim:gammaG'}
$\gamma (G') = \min \{\gamma (G) + 1, \ell + 1\}$. 
\end{Claim}

It is clear that $\{x_0,x_1, \ldots, x_\ell\}$ is a dominating set of $G'$; thus, $\gamma (G') \leq \ell +1$. If $\gamma (G) \leq \ell$ and $\{v_{i_1},\ldots, v_{i_k}\}$ is a minimum dominating set of $G$, it is easily seen that $\{v^0_{i_1},\ldots, v^0_{i_k},x_0\}$ is a dominating set of $G'$. Thus, $\gamma (G') \leq \gamma (G) + 1$ and so, $\gamma (G') \leq \min \{\gamma (G) + 1, \ell + 1\}$. Now, suppose to the contrary that $\gamma (G') < \min \{\gamma (G) + 1, \ell + 1\}$ and consider a minimum dominating set $D'$ of $G'$. We first make the following simple observation.

\begin{observation}
\label{obs:yxl+1}
For any dominating set $D$ of $G'$, $D \cap \{y,x_0\} \neq~\emptyset$.
\end{observation}

Now, since $\gamma (G') < \ell + 1$, there exists $1 \leq i \leq \ell$ such that $x_i \not\in D'$ (otherwise, $\{x_1,\ldots,x_{\ell}\} \subset D'$ and combined with Observation \ref{obs:yxl+1}, $D'$ would be of size at least $\ell + 1$). But then, $D'' = D' \cap V_0$ must dominate every vertex in $V_i$, and so $\vert D'' \vert \geq \gamma(G)$. Since $\vert D'' \vert \leq \vert D' \vert - 1$ (recall that $D' \cap \{y,x_0\} \neq \emptyset$), we then have $\gamma (G) \leq \vert D' \vert - 1$, a contradiction. Thus, $\gamma (G') = \min \{\gamma (G) + 1, \ell + 1\}$.\\

We now show that $(G,\ell)$ is a \yes-instance for \dom{} if and only if $G'$ is a \yes-instance for \contracd{}.

First assume that $\gamma (G) \leq \ell$. Then, $\gamma (G') = \gamma(G) + 1$ by the previous claim, and if $\{v_{i_1},\ldots,v_{i_k}\}$ is a minimum dominating set of $G$, then $\{v^0_{i_1},\ldots,v^0_{i_k},x_0\}$ is a minimum dominating set of $G'$ which is not stable. Hence, by Theorem \ref{theorem:contracdom}, $G'$ is a \yes-instance for \contracd{}.

Conversely, assume that $G'$ is a \yes-instance for \contracd{} i.e., there exists a minimum dominating set $D'$ of $G'$ which is not stable (see Theorem \ref{theorem:contracdom}). Then, Observation \ref{obs:yxl+1} implies that there exists $1 \leq i \leq \ell$ such that $x_i \not\in D'$; indeed, if it weren't the case, then by Claim \ref{claim:gammaG'} we would have $\gamma (G') = \ell + 1$ and thus, $D'$ would consist of $x_1, \ldots, x_{\ell}$ and either $y$ or $x_0$. In both cases, $D'$ would be stable, a contradiction. It follows that $D'' = D' \cap V_0$ must dominate every vertex in $V_i$ and thus, $\vert D'' \vert \geq \gamma (G)$. But $\vert D'' \vert \leq \vert D' \vert -1$ (recall that $D' \cap \{y,x_0\} \neq \emptyset$) and so by Claim \ref{claim:gammaG'}, $\gamma (G) \leq \vert D' \vert - 1 \leq (\ell + 1) - 1$ that is, $(G,\ell)$ is a \yes-instance for \dom{}.\\

In the following, we show that $G'$ is a $P_6$-free graph. Let $P$ be an induced path of $G'$. We start by observing that, since $V_0$ is a clique, $|V(P)\cap V_0|\leq 2$. If $|V(P)\cap V_0|= 0$, since each $V_i$ is independent and the same holds for $\{x_0,\ldots,x_\ell\}$, we have that $|V(P)|\leq 3$. We now consider the following two cases:\\

\noindent \textbf{Case 1.} $|V(P)\cap V_0|= 2$. 

Let $u,v\in V_0$ be the vertices of $V(P)\cap V_0$. Since $P$ is an induced path, $u$ and $v$ appear consecutively in $P$, that is, $uv\in E(P)$. Furthermore, $V(P)\cap \{x_0,\ldots,x_\ell\}=\emptyset$, since $u$ and $v$ are adjacent to all the vertices of $\{x_0,\ldots,x_\ell\}$. Let $w\in V_i$ be the other neighbor of $u$ in $P$. Since $N(w)\subset V_0\cup \{x_i\}$, $w$ can have no neighbor in $P$ other than $u$, that is, $w$ is an endpoint of the path. Symmetrically, the same holds for a neighbor of $v$ that is different from $u$. Hence, we conclude that $|V(P)|\leq 4$.\\

\noindent \textbf{Case 2.} $|V(P)\cap V_0|= 1$.

Let $u\in V_0$ be the vertex of $V(P)\cap V_0$. If $V(P)\cap \{x_0,\ldots,x_\ell\}=\emptyset$, then it is easy to see that $|V(P)|\leq 3$, since any neighbor of $u$ in the path must belong to $\cup_{1\leq i\leq\ell} V_i$ and, by the same argument used in Case 1, such a neighbor would have to be an endpoint of the path. If $V(P)\cap \{x_0,\ldots,x_\ell\}\neq\emptyset$, let $x_i$ be a vertex that is in $P$. Since $ux_i\in E(G')$, we necessarily have that $ux_i\in E(P)$. Let $w$ be the other neighbor of $x_i$ in $P$. Then $w\in V_i$, since $N(x_i)=V_0\cup V_i$. By the same argument used above, $w$ must then be an endpoint of the path; and since $u$ can have at most two neighbors in $\{x_0,\ldots,x_\ell\}$, we conclude that $|V(P)|\leq 5$.\\

To see that $G'$ is also a $\{P_4+P_2\}$-free graph it suffices to note that any induced $P_4$ of $G'$ contains at least one vertex of the clique $V_0$. This concludes the proof of Theorem~\ref{thm:p6free}.

%====================================================================================================================

\section{The proof of Theorem \ref{thm:clawfree}}

\setcounter{observation}{0}
\setcounter{Claim}{0}

In this section, we show that \contracd{} is $\mathsf{coNP}$-hard when restricted to subcubic claw-free graphs. To this end, we first prove the following.

\begin{lemma}
\label{lemma:efficient}
{\sc All Efficient MD} is $\mathsf{NP}$-hard when restricted to subcubic graphs.
\end{lemma}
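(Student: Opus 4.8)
The plan is to prove hardness already on the subclass of connected \emph{cubic} graphs (every vertex of degree exactly~$3$), which suffices, since cubic graphs are subcubic and NP-hardness on a subclass transfers to the superclass. The conceptual heart is a counting observation that, for cubic graphs, collapses the universally quantified condition ``every minimum dominating set is efficient'' into the purely existential condition ``$G$ admits an efficient dominating set.'' Concretely, I would reduce from a suitable NP-complete problem (for instance Exact Cover by $3$-Sets, or a bounded-occurrence variant of $3$-SAT) and build from each instance a connected cubic graph $G$ on $n$ vertices whose efficient dominating sets (perfect codes) correspond to the solutions of the source instance; the reduction then follows at once from the counting equivalence.

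The key lemma I would isolate is the following. If $G$ is cubic on $n$ vertices, then every dominating set $D$ satisfies $\sum_{v\in D}|N[v]| = 4|D| \geq n$, so $\gamma(G)\geq n/4$; moreover a dominating set of size exactly $n/4$ forces $\sum_{v\in D}|N[v]| = n = |V|$ together with $\bigcup_{v\in D}N[v] = V$, i.e.\ the closed neighbourhoods of $D$ partition $V$, which is precisely efficiency. From this I derive, for cubic $G$, the equivalence: every minimum dominating set of $G$ is efficient if and only if $G$ has an efficient dominating set. Indeed, if $G$ has an efficient dominating set $C$, then $C$ is a dominating set of size $n/4$, so $\gamma(G)=n/4$ and hence every minimum dominating set has size $n/4$ and is efficient by the counting; conversely, any minimum dominating set is in particular efficient, so an efficient dominating set exists. (If $4\nmid n$ there is no efficient dominating set and, by the same counting, every dominating set is non-efficient, so the ``No'' side is automatic; I would make the reduction output $n$ with $4\mid n$.)

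With this equivalence, the construction reduces to the well-studied task of producing connected cubic instances of the perfect-code (efficient domination) problem. I would encode the variables and clauses (or the ground set and the $3$-sets) by connected gadgets in which selecting a code vertex corresponds to a truth value (or to picking a set into the cover), wiring them so that a set of chosen vertices forms an efficient dominating set exactly when it comes from a satisfying assignment (an exact cover); degree-$3$ padding gadgets would then be attached to bring every vertex to degree exactly~$3$ without creating new efficient dominating sets. Combining the gadget analysis with the counting equivalence yields: the source instance is a \yes-instance if and only if $G$ has an efficient dominating set, if and only if every minimum dominating set of $G$ is efficient, i.e.\ $G$ is a \yes-instance of {\sc All Efficient MD}.

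The main obstacle I anticipate is the gadget design itself: I must realise the encoding inside a graph in which \emph{every} vertex has degree exactly~$3$ and which is connected, while guaranteeing that the perfect codes are \emph{exactly} the images of the source solutions --- in particular that the auxiliary vertices introduced to enforce $3$-regularity cannot be combined with partial selections to produce spurious efficient dominating sets, and that no efficient dominating set uses a gadget inconsistently. Verifying this exact correspondence in both directions, together with connectivity and the divisibility $4\mid n$, is where essentially all the technical effort lies; once it is established, the counting equivalence makes the passage to {\sc All Efficient MD} immediate.
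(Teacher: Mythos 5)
Your counting lemma is correct and is a genuinely different (and conceptually cleaner) route than the paper's. For a $3$-regular graph on $n$ vertices every closed neighbourhood has size $4$, so $\gamma(G)\geq n/4$, with equality forcing the closed neighbourhoods of the dominating set to partition $V$, i.e.\ forcing efficiency; hence on cubic graphs the universally quantified question ``is every minimum dominating set efficient?'' collapses to the existential question ``does $G$ admit an efficient dominating set (perfect code)?''. The paper does not have this luxury: its graph $G_\Phi$ is subcubic but not regular (the $u_x^i$ and the variable/clause vertices have degree $2$ or $3$ in an uneven mix), so it must prove the equivalence between ``$\gamma(G_\Phi)=3|X|+|C|$'' and ``every minimum dominating set is efficient'' by a hands-on case analysis over its $C_9$ variable gadgets and clique-based clause gadgets (Claims 3.1.1 and 3.1.2). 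Your regularity trick would eliminate most of that case analysis, which is a real gain.

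The gap is that you have only reduced the problem to another NP-hardness statement --- ``deciding whether a connected cubic graph has an efficient dominating set is NP-hard'' --- and you do not establish it: the gadgets, the degree-$3$ padding, the verification that padding creates no spurious perfect codes, connectivity, and $4\mid n$ are all deferred, and you yourself note that this is where essentially all the technical effort lies. That deferred part is precisely the analogue of what the paper actually proves. The gap is fillable, since NP-completeness of \textsc{Efficient Dominating Set} on connected cubic (even planar cubic) graphs is a known result (Kratochv\'il; Fellows and Hoover), and citing it would make your argument complete and arguably shorter than the paper's. But as written, without either that citation or an explicit verified construction, the proof is not finished: the counting equivalence alone does not produce a hard instance.
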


\begin{proof}
We reduce from {\sc Positive Exactly 3-Bounded 1-In-3 3-Sat}, where each variable appears in exactly three clauses and only positively, each clause contains three positive literals, and we want a truth assignment such that each clause contains exactly one true literal. This problem is shown to be $\mathsf{NP}$-complete in \cite{moore}. Given an instance $\Phi$ of this problem, with variable set $X$ and clause set $C$, we construct an equivalent instance of {\sc All Efficient MD} as follows. For any variable $x \in X$, we introduce a copy of $C_9$, which we denote by $G_x$, with three distinguished \textit{true vertices} $T^1_x$, $T^2_x$ and $T^3_x$, and three distinguished \textit{false vertices} $F^1_x$, $F^2_x$ and $F^3_x$ (see Fig. \ref{fig:vargad}). For any clause $c \in C$ containing variables $x_1$, $x_2$ and $x_3$, we introduce the gadget $G_c$ depicted in Fig. \ref{fig:clausegad} which has one distinguished \textit{clause vertex} $c$ and three distinguished \textit{variable vertices} $x_1$, $x_2$ and $x_3$ (note that $G_c$ is not connected). For every $j \in \{1,2,3\}$, we then add an edge between $x_j$ and $F_{x_j}^i$ and between $c$ and $T_{x_j}^i$ for some $i \in \{1,2,3\}$ so that $F_{x_j}^i$ (resp. $T_{x_j}^i$) is adjacent to exactly one variable vertex (resp. clause vertex). We denote by $G_{\Phi}$ the resulting graph. Note that $\Delta (G_{\Phi}) = 3$.

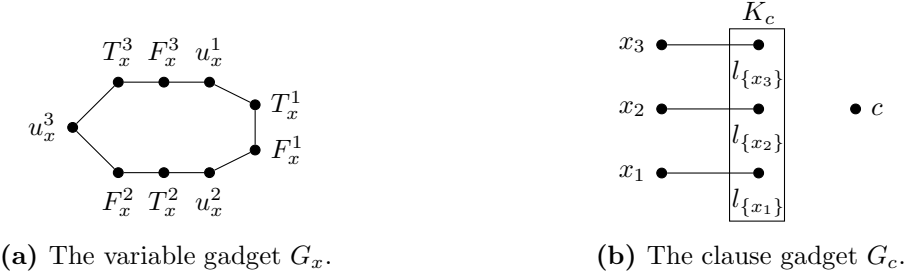
\begin{figure}[htb]
\centering
\begin{subfigure}[b]{.45\textwidth}
\centering
\begin{tikzpicture}[scale=.6]
\node[circ,label=below:{\small $F_x^2$}] (f2) at (1,0) {};
\node[circ,label=below:{\small $T_x^2$}] (t2) at (2,0) {};
\node[circ,label=below:{\small $u_x^2$}] (u2) at (3,0) {};
\node[circ,label=right:{\small $F_x^1$}] (f1) at (4,.5) {};
\node[circ,label=right:{\small $T_x^1$}] (t1) at (4,1.5) {};
\node[circ,label=above:{\small $u_x^1$}] (u1) at (3,2) {};
\node[circ,label=above:{\small $F_x^3$}] (f3) at (2,2) {};
\node[circ,label=above:{\small $T_x^3$}] (t3) at (1,2) {};
\node[circ,label=left:{\small $u_x^3$}] (u3) at (0,1) {};

\draw[-] (u3) -- (f2) -- (u2) -- (f1) -- (t1) -- (u1) -- (t3) -- (u3);
\end{tikzpicture}
\caption{The variable gadget $G_x$.}
\label{fig:vargad}
\end{subfigure}
\hspace*{.5cm}
\begin{subfigure}[b]{.45\textwidth}
\centering
\begin{tikzpicture}[scale=.85]
\node[circ,label=left:{\small $x_1$}] (x1) at (0,1) {};
\node[circ,label=left:{\small $x_2$}] (x2) at (0,2) {};
\node[circ,label=left:{\small $x_3$}] (x3) at (0,3) {};
\node[circ,label=below:{\small $l_{\{x_1\}}$}] (l1) at (1.5,1) {};
\node[circ,label=below:{\small $l_{\{x_2\}}$}] (l2) at (1.5,2) {};
\node[circ,label=below:{\small $l_{\{x_3\}}$}] (l3) at (1.5,3) {};
\node[circ,label=right:{\small $c$}] at (3,2) {};

\draw[-] (x1) -- (l1)
(x2) -- (l2)
(x3) -- (l3);

\draw (1.05,.25) rectangle (1.9,3.25);
\node[draw=none] at (1.5,3.5) {\small $K_c$};
\end{tikzpicture}
\caption{The clause gadget $G_c$.}
\label{fig:clausegad}
\end{subfigure}
\caption{Construction of the graph $G_{\Phi}$ (the rectangle indicates that the corresponding set of vertices induces a clique).}
\end{figure}

\begin{observation}
\label{obs:size}
For any dominating set $D$ of $G_{\Phi}$, $\vert D \cap V(G_x) \vert \geq 3$ for any $x \in X$ and $\vert D \cap V(G_c) \vert \geq 1$ for any $c \in C$. In particular, $\gamma(G_{\Phi}) \geq 3 \vert X \vert + \vert C \vert$.
\end{observation}

Indeed, for any $x \in X$, since $u_x^1$, $u_x^2$ and $u_x^3$ must be dominated and their neighborhoods are pairwise disjoint and contained in $G_x$, it follows that $\vert D \cap V(G_x) \vert \geq 3$. For any $c \in C$, since the vertices of $K_c$ must be dominated and their neighborhoods are contained in $G_c$, $\vert D \cap V(G_c) \vert \geq 1$. $\diamond$ 

\begin{observation}
\label{obs:mingx}
For any $x \in X$, if $D$ is a minimum dominating set of $G_x$ then either $D = \{u_x^1,u_x^2,u_x^3\}$, $D = \{T_x^1,T_x^2,T_x^3\}$ or $D = \{F_x^1,F_x^2,F_x^3\}$.
\end{observation}

\begin{Claim}
\label{clm:phisat}
$\Phi$ is satisfiable if and only if $\gamma (G_{\Phi}) = 3 \vert X \vert + \vert C \vert$.
\end{Claim}

\begin{claimproof}
Assume that $\Phi$ is satisfiable and consider a truth assignment satisfying $\Phi$. We construct a dominating set $D$ of $G_{\Phi}$ as follows. For any variable $x \in X$, if $x$ is true, add $T_x^1$, $T_x^2$ and $T_x^3$ to $D$; otherwise, add $F_x^1$, $F_x^2$ and $F_x^3$ to $D$. For any clause $c \in C$ containing variables $x_1$, $x_2$ and $x_3$, exactly one variable is true, say $x_1$ without loss of generality; we then add $l_{\{x_1\}}$ to $D$. Clearly, $D$ is dominating and we conclude by Observation~\ref{obs:size} that $\gamma (G_{\Phi}) = 3 \vert X \vert + \vert C \vert$.

Conversely, assume that $\gamma (G_{\Phi}) = 3 \vert X \vert + \vert C \vert$ and consider a minimum dominating set $D$ of $G_{\Phi}$. Then by Observation \ref{obs:size}, $\vert D \cap V(G_x) \vert = 3$ for any $x \in X$ and $\vert D \cap V(G_c) \vert = 1$ for any $c \in C$. Now, for a clause $c \in C$ containing variables $x_1$, $x_2$ and $x_3$, if $D \cap \{c,x_1,x_2,x_3\} \neq \emptyset$ then $D \cap V(K_c) = \emptyset$ and so, at least two vertices from $K_c$ are not dominated; thus, $D \cap \{c,x_1,x_2,x_3\} = \emptyset$. It follows that for any $x \in X$, $D \cap V(G_x)$ is a minimum dominating set of $G_x$ which by Observation \ref{obs:mingx} implies either $\{T_x^1,T_x^2,T_x^3\} \subset D$ or $D \cap \{T_x^1,T_x^2,T_x^3\} = \emptyset$; and we conclude similarly that either $\{F_x^1,F_x^2,F_x^3\} \subset D$ or $D \cap \{F_x^1,F_x^2,F_x^3\} = \emptyset$. Now given a clause $c \in C$ containing variables $x_1$, $x_2$ and $x_3$, since $D \cap \{c,x_1,x_2,x_3\} = \emptyset$, at least one true vertex adjacent to the clause vertex $c$ must belong to $D$, say $T_{x_1}^i$ for some $i \in\{1,2,3\}$ without loss of generality. It then follows that $\{T_{x_1}^1,T_{x_1}^2,T_{x_1}^3\} \subset D$ and $D \cap \{F_{x_1}^1,F_{x_1}^2,F_{x_1}^3\} = \emptyset$ which implies that $l_{\{x_1\}} \in D$ (either $x_1$ or a vertex from $K_c$ would otherwise not be dominated). But then, since $x_j$ for $j \neq 1$, must be dominated, it follows that $\{F_{x_j}^1,F_{x_j}^2,F_{x_j}^3\} \subset D$. We thus construct a truth assignment satisfying $\Phi$ as follows: for any variable $x \in X$, if $\{T_x^1,T_x^2,T_x^3\} \subset D$, set $x$ to true, otherwise set $x$ to false.
\end{claimproof} 

\begin{Claim}
\label{clm:eff}
$\gamma (G_{\Phi}) = 3 \vert X \vert + \vert C \vert$ if and only if every minimum dominating set of $G_{\Phi}$ is efficient.
\end{Claim}

\begin{claimproof}
Assume that $\gamma (G_{\Phi}) = 3 \vert X \vert + \vert C \vert$ and consider a minimum dominating set $D$ of $G_{\Phi}$. Then by Observation~\ref{obs:size}, $\vert D \cap V(G_x) \vert = 3$ for any $x \in X$ and $\vert D \cap V(G_c) \vert = 1$ for any $c \in C$. As shown previously, it follows that for any clause $c \in C$ containing variables $x_1$, $x_2$ and $x_3$, $D \cap \{c,x_1,x_2,x_3\} = \emptyset$; and for any $x \in X$, either $\{T_x^1,T_x^2,T_x^3\} \subset D$ or $D \cap \{T_x^1,T_x^2,T_x^3\} = \emptyset$ (we conclude similarly with $\{F_x^1,F_x^2,F_x^3\}$ and $\{u_x^1,u_x^2,u_x^3\}$). Thus, for any $x \in X$, every vertex in $G_x$ is dominated by exactly one vertex. Now given a clause $c \in C$ containing variables $x_1$, $x_2$ and $x_3$, since the clause vertex $c$ does not belong to $D$, there exists at least one true vertex adjacent to $c$ which belongs to $D$. Suppose to the contrary that $c$ has strictly more than one neighbor in $D$, say $T_{x_1}^i$ and $T_{x_2}^j$ without loss of generality. Then, $\{T_{x_k}^1,T_{x_k}^2,T_{x_k}^3\} \subset D$ for $k=1,2$ which implies that $D \cap \{F_{x_1}^1,F_{x_1}^2,F_{x_1}^3,F_{x_2}^1,F_{x_2}^2,F_{x_2}^3\} = \emptyset$ as $\vert D \cap V(G_{x_k}) \vert = 3$ for $k=1,2$. It follows that the variable vertices $x_1$ and $x_2$ must be dominated by some vertices in $G_c$; but $\vert D \cap V(G_c) \vert = 1$ and $N[x_1] \cap N[x_2] = \emptyset$ and so, either $x_1$ or $x_2$ is not dominated. Thus, $c$ has exactly one neighbor in $D$, say $T_{x_1}^i$ without loss of generality. Then, necessarily $D \cap V(G_c) = \{l_{\{x_1\}}\}$  for otherwise either $x_1$ or some vertex in $K_c$ would not be dominated. But then, it is clear that every vertex in $G_c$ is dominated by exactly one vertex; thus, $D$ is efficient.

Conversely, assume that every minimum dominating set of $G_{\Phi}$ is efficient and consider a minimum dominating set $D$ of $G_{\Phi}$. If for some $x \in X$, $\vert D \cap V(G_x) \vert \geq 4$, then clearly at least one vertex in $G_x$ is dominated by two vertices in $D \cap V(G_x)$. Thus, $\vert D \cap V(G_x) \vert \leq 3$ for any $x \in X$ and we conclude by Observation \ref{obs:size} that in fact, equality holds. The next observation immediately follows from the fact that $D$ is efficient.

\begin{observation}
\label{obs:effgx}
For any $x \in X$, if $\vert D \cap V(G_x) \vert = 3$ then either $\{u_x^1,u_x^2,u_x^3\} \subset D$, $\{T_x^1,T_x^2,T_x^3\} \subset D$ or $\{F_x^1,F_x^2,F_x^3\} \subset D$.
\end{observation}

Now, consider a clause $c \in C$ containing variables $x_1$, $x_2$ and $x_3$ and suppose without loss of generality that $T_{x_1}^1$ is adjacent to $c$ (note that then the variable vertex $x_1$ is adjacent to $F_{x_1}^1$). If the clause vertex $c$ belongs to $D$ then, since $D$ is efficient, $T_{x_1}^1 \notin D$ and $u_{x_1}^1,F_{x_1}^1 \notin D$ ($T_{x_1}^1$ would otherwise be dominated by at least two vertices) which contradicts Observation \ref{obs:effgx}. Thus, no clause vertex belongs to $D$. Similarly, suppose that there exists $i \in \{1,2,3\}$ such that $x_i \in D$, say $x_1 \in D$ without loss of generality. Then, since $D$ is efficient, $F_{x_1}^1 \notin D$ and $T_{x_1}^1,u_{x_1}^2 \notin D$ ($F_{x_1}^1$ would otherwise be dominated by at least two vertices) which again contradicts Observation \ref{obs:effgx}. Thus, no variable vertex belongs to $D$. Finally, since $D$ is efficient, $\vert D \cap V(K_c) \vert \leq 1$ and so, $\vert D \cap V(G_c) \vert = 1$ by Observation \ref{obs:size}.
\end{claimproof}

Now by combining Claims \ref{clm:phisat} and \ref{clm:eff}, we obtain that $\Phi$ is satisfiable if and only if every minimum dominating set of $G_{\Phi}$ is efficient, that is, $G_{\Phi}$ is a \yes-instance for {\sc All Efficient MD}.
\end{proof}

\begin{theorem}
\label{thm:indepmd2}
{\sc All Independent MD} is $\mathsf{NP}$-hard when restricted to subcubic claw-free graphs.
\end{theorem}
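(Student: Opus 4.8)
The plan is to reduce from {\sc All Efficient MD} on subcubic graphs, which is $\mathsf{NP}$-hard by Lemma~\ref{lemma:efficient}. The central idea is that an \emph{efficient} dominating set is exactly an \emph{independent} dominating set in which every vertex is dominated \emph{precisely once}; so morally these two properties are close, but they differ in general. The natural strategy is to take the graph $G_\Phi$ produced by the reduction of Lemma~\ref{lemma:efficient} and augment it, or build a fresh gadget-based graph $G'_\Phi$, so that (i) the minimum dominating sets of $G'_\Phi$ are in correspondence with those of $G_\Phi$, (ii) the new graph is claw-free and still subcubic, and (iii) a minimum dominating set of $G'_\Phi$ is independent if and only if the corresponding set in $G_\Phi$ is efficient. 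If such a construction is available, then $G'_\Phi$ is a \yes-instance for {\sc All Independent MD} iff $G_\Phi$ is a \yes-instance for {\sc All Efficient MD}, and we are done.

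Concretely, I would first revisit the gadgets of Lemma~\ref{lemma:efficient}: the variable gadget is already a $C_9$, which is claw-free, so the obstructions to claw-freeness live at the clause gadget (the clique $K_c$ together with its attachment vertices) and at the junction edges joining $T_{x_j}^i$ to $c$ and $F_{x_j}^i$ to the variable vertex $x_j$. My first step is therefore to redesign the clause gadget so that the clause vertex $c$ and the variable vertices $x_j$ have their neighborhoods arranged into cliques (no independent triple in any closed neighborhood), while preserving the counting statements in Observations~\ref{obs:size} and~\ref{obs:mingx} and the domination-number equivalence of Claim~\ref{clm:phisat}. A standard way to kill claws at a degree-three vertex $v$ is to ensure its three neighbors are not pairwise nonadjacent; since the graph must stay subcubic, I would locally replace offending vertices by small claw-free patterns (for instance by subdividing and doubling, or by attaching triangles/$C_9$-type cycles) so that each vertex keeps degree at most three and no induced claw survives.

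The heart of the argument is step (iii), the translation from efficiency to independence. Here I would exploit the structure already established in the proof of Claim~\ref{clm:eff}: for the graph $G_\Phi$, a minimum dominating set fails to be efficient exactly when some vertex (a clause vertex or a variable vertex, or a vertex of $C_9$) is dominated twice, and the analysis there shows the only way this can happen is through the junction edges between the variable gadgets and the clause gadgets. The key observation to engineer is that, in the modified graph, the \emph{only} minimum dominating sets that contain an edge are precisely those in which some vertex is dominated more than once in the old sense; equivalently, within each gadget a minimum solution is forced to pick an independent transversal (as in Observation~\ref{obs:mingx}, where the three choices $\{u_x^i\}$, $\{T_x^i\}$, $\{F_x^i\}$ are all independent), so the only possible non-independence arises across gadget boundaries, at exactly the double-domination events. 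I would make this precise by proving two lemmas mirroring the two directions of Claim~\ref{clm:eff}: that a minimum dominating set of $G'_\Phi$ is independent iff it induces no double-dominated vertex iff the corresponding set is efficient in $G_\Phi$.

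The main obstacle I anticipate is simultaneously satisfying the three competing constraints: claw-freeness forces neighborhoods to be clustered into cliques, subcubicity caps every degree at three, and the efficiency-versus-independence equivalence requires tight control over which minimum dominating sets can contain an edge. In particular, making the clause gadget claw-free without inflating the domination number or creating spurious minimum dominating sets that break the correspondence is delicate, because adding edges to destroy claws tends to create new dominating configurations. I would expect to spend most of the effort verifying, gadget by gadget, that the new minimum dominating sets are \emph{exactly} the intended ones and that for each of them ``contains an edge'' coincides with ``is not efficient in $G_\Phi$''; once that bookkeeping is done, the reduction and the final $\mathsf{coNP}$-hardness of \contracd{} on subcubic claw-free graphs (via Fact~\ref{obs:equi}) follow immediately.
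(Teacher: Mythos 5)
Your high-level strategy is the one the paper actually follows: start from the graph $G_\Phi$ of Lemma~\ref{lemma:efficient} and perform local replacements so that the result is subcubic and claw-free, minimum dominating sets correspond, and ``independent'' in the new graph matches ``efficient'' in the old one. But as written the proposal has two genuine gaps. First, the construction itself is left entirely open (``if such a construction is available''), and your proposed fix --- redesigning only the clause gadget and the junction edges --- does not suffice: the claws of $G_\Phi$ sit at \emph{every} true vertex, false vertex and clause vertex (each has three pairwise non-adjacent neighbours, two on a $C_9$ or in distinct variable gadgets plus one junction neighbour), so clustering neighbourhoods into cliques at these vertices while keeping degree at most three is impossible by edge additions alone. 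The paper's solution is to replace \emph{every} vertex of $G_\Phi$ (degree~2 and degree~3 alike) by a bespoke subcubic claw-free widget whose contribution to any dominating set is exactly $2$ or $5$ when the original vertex is ``unused'' and $3$ or $6$ when it is ``used,'' and proving these counts (Claim~\ref{clm:gammas} and Observations~\ref{obs:size2}--\ref{obs:min}) is a substantial part of the work you are deferring.

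Second, and more fundamentally, your step (iii) conflates two conditions that are not equivalent. A minimum dominating set $D$ of $G_\Phi$ can fail to be efficient while being \emph{independent}: the doubly dominated vertex typically lies \emph{outside} $D$, so non-efficiency does not hand you an edge inside $D$, and hence does not directly yield a non-independent minimum dominating set of the new graph. This is exactly the hard direction of the paper's proof: it must show that every non-efficient minimum dominating set of $G_\Phi$ either contains an edge or can be locally exchanged into a minimum dominating set that does, which requires a six-case analysis over the type of the doubly dominated vertex (variable vertex, $u_x^i$, clause vertex, $K_c$-vertex, true vertex, false vertex), with repeated swap arguments such as replacing $u_x^1$ by $F_x^3$. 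Your proposal asserts that ``the only possible non-independence arises across gadget boundaries, at exactly the double-domination events'' as something to be ``engineered,'' but no gadget engineering can make this true by construction, because the asymmetry between an edge \emph{in} $D$ and a vertex dominated twice \emph{by} $D$ lives in $G_\Phi$ itself, before any gadgets are applied. Without an argument covering this case, the reduction fails in one direction: $G_\Phi$ could be a \no-instance of {\sc All Efficient MD} witnessed by an independent, non-efficient minimum dominating set, while the transformed graph remains a \yes-instance of {\sc All Independent MD}.
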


\begin{proof}
We use a reduction from {\sc Positive Exactly 3-Bounded 1-In-3 3-Sat}, where each variable appears in exactly three clauses and only positively, each clause contains three positive literals, and we want a truth assignment such that each clause contains exactly one true literal. This problem is shown to be $\mathsf{NP}$-complete in \cite{moore}. Given an instance $\Phi$ of this problem, with variable set $X$ and clause set $C$, we construct an equivalent instance of {\sc All Independent MD} as follows. Consider the graph $G_{\Phi} = (V,E)$ constructed in the proof of Lemma \ref{lemma:efficient} and let $V_i = \{v \in V: d_{G_{\Phi}}(v) = i\}$ for $i =2,3$ (note that no vertex in $G_{\Phi}$ has degree one). Then, for any $v\ \in V_3$, we replace the vertex $v$ by the gadget $G_v$ depicted in Fig. \ref{fig:dv3}; and for any $v \in V_2$, we replace the vertex $v$ by the gadget $G_v$ depicted in Fig. \ref{fig:dv2}. We denote by $G'_{\Phi}$ the resulting graph. Note that $G'_{\Phi}$ is claw-free and $\Delta (G'_{\Phi}) = 3$ (also note that no vertex in $G'_{\Phi}$ has degree one). It is shown in the proof of Lemma \ref{lemma:efficient} that $\Phi$ is satisfiable if and only if $G_{\Phi}$ is a \yes-instance for {\sc All Efficient MD}; we here show that $G_{\Phi}$ is a \yes-instance for {\sc All Efficient MD} if and only if $G'_{\Phi}$ is a \yes-instance for {\sc All Independent MD}. To this end, we first prove the following.

\begin{figure}[htb]
\centering
\begin{subfigure}[b]{.45\textwidth}
\centering
\begin{tikzpicture}[scale=.5]
\node[circ,label=below:{\tiny $v_2$}] (v2) at (0,0) {};
\node[circ,label=below:{\tiny $u_2$}] (u2) at (1,0) {};
\node[circ,label=left:{\tiny $w_2$}] (w2) at (.5,1) {};

\node[circ,label=below:{\tiny $v_3$}] (v3) at (5,0) {};
\node[circ,label=below:{\tiny $w_3$}] (w3) at (4,0) {};
\node[circ,label=right:{\tiny $u_3$}] (u3) at (4.5,1) {};

\node[circ,label=right:{\tiny $v_1$}] (v1) at (2.5,5) {};
\node[circ,label=left:{\tiny $u_1$}] (u1) at (2,4) {};
\node[circ,label=right:{\tiny $w_1$}] (w1) at (3,4) {};

\draw[-] (v2) -- (v1) node[circ,pos=.5,label=left:{\tiny $b_1$}] {} node[circ,pos=.35,label=left:{\tiny $c_1$}] {} node[circ,pos=.65,label=left:{\tiny $a_1$}] {};
\draw[-] (v3) -- (v1) node[circ,pos=.5,label=right:{\tiny $b_3$}] {} node[circ,pos=.35,label=right:{\tiny $a_3$}] {} node[circ,pos=.65,label=right:{\tiny $c_3$}] {};
\draw[-] (v2) -- (v3) node[circ,pos=.5,label=below:{\tiny $b_2$}] {} node[circ,pos=.35,label=below:{\tiny $a_2$}] {} node[circ,pos=.65,label=below:{\tiny $c_2$}] {};
\draw[-] (u1) -- (w1) 
(u2) -- (w2)
(u3) -- (w3);

\draw[-] (-.5,-.5) -- (v2) node[near start,left] {\tiny $e_2$};
\draw[-] (5.5,-.5) -- (v3) node[near start,right] {\tiny $e_3$};
\draw[-] (2.5, 5.5) -- (v1) node[near start,left] {\tiny $e_1$};

\draw[-Implies,line width=.6pt,double distance=2pt] (-1,2.5) -- (0,2.5);

\node[circ,label=below:{\tiny $v$}] (v) at (-3,2.5) {};
\node[circ] (e2) at (-4,1.5) {};
\node[circ] (e3) at (-2,1.5) {};
\node[circ] (e1) at (-3,3.5) {};

\draw[-] (v) -- (e1) node[midway,right] {\tiny $e_1$};
\draw[-] (v) -- (e2) node[midway,left] {\tiny $e_2$};
\draw[-] (v) -- (e3) node[midway,right] {\tiny $e_3$};
\end{tikzpicture}
\caption{$d_{G_{\Phi}}(v) = 3$.}
\label{fig:dv3}
\end{subfigure}
\hspace*{.5cm}
\begin{subfigure}[b]{.45\textwidth}
\centering
\begin{tikzpicture}[scale=.6]
\node[circ,label=above:{\tiny $v$}] (v) at (1,0) {};
\node[circ] (1) at (0,0) {};
\node[circ] (2) at (2,0) {};

\draw[-] (1) -- (v) node[midway,above] {\tiny $e_1$};
\draw[-] (v) -- (2) node[midway,above] {\tiny $e_2$};

\draw[-Implies,line width=.6pt,double distance=2pt] (2.5,0) -- (3.5,0);

\node[circ,label=above:{\tiny $v_1$}] (v1) at (4.5,0) {};
\node[circ,label=above:{\tiny $u_1$}] at (5,0) {};
\node[circ,label=above:{\tiny $a_1$}] at (5.5,0) {};
\node[circ,label=above:{\tiny $b_1$}] at (6,0) {};
\node[circ,label=above:{\tiny $c_1$}] at (6.5,0) {};
\node[circ,label=above:{\tiny $u_2$}] at (7,0) {};
\node[circ,label=above:{\tiny $v_2$}] (v2) at (7.5,0) {};

\draw[-] (4,0) -- (8,0) node[pos=0,above] {\tiny $e_1$} node[pos=1,above] {\tiny $e_2$};

\node[invisible] at (3,-3) {};
\end{tikzpicture}
\caption{$d_{G_{\Phi}}(v) = 2$.}
\label{fig:dv2}
\end{subfigure}
\caption{The gadget $G_v$.}
\end{figure}
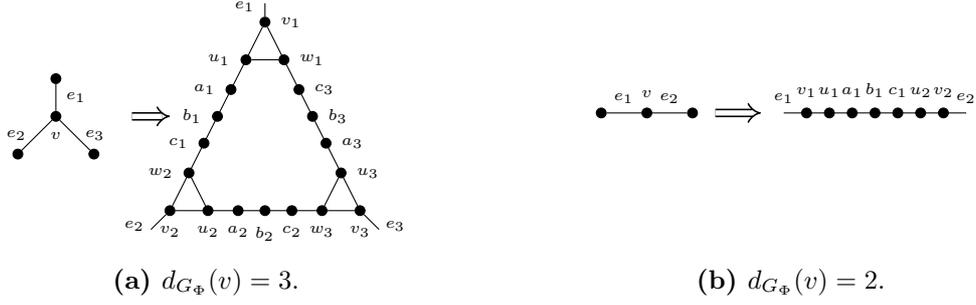

\begin{Claim}
\label{clm:gammas}
$\gamma (G'_{\Phi}) = \gamma (G_{\Phi}) + 5 \vert V_3 \vert + 2 \vert V_2 \vert$.
\end{Claim}

\begin{claimproof}
Let $D$ be a minimum dominating set of $G_{\Phi}$. We construct a dominating set $D'$ of $G'_{\Phi}$ as follows. For any $v \in D$, if $v \in V_3$, add $v_1$, $v_2$, $v_3$, $b_1$, $b_2$, and $b_3$ to $D'$; otherwise, add $v_1$, $v_2$ and $b_1$ to $D'$. For any $v \in V \setminus D$, let $u \in D$ be a neighbor of $v$, say $e_1 =uv$ without loss of generality. Then, if $v \in V_3$, add $a_1$, $c_3$, $w_2$, $u_3$ and $b_2$ to $D'$; otherwise, add $a_1$ and $u_2$ to $D'$. Clearly, $D'$ is dominating and $\vert D' \vert = \gamma (G_{\Phi}) + 5 \vert V_3 \vert + 2 \vert V_2 \vert \geq \gamma (G'_{\Phi})$.

\begin{observation}
\label{obs:size2}
For any dominating set $D'$ of $G'_{\Phi}$, the following holds.
\begin{itemize}
\item[(i)] For any $v \in V_2$, $\vert D' \cap V(G_v) \vert \geq 2$. Moreover, if equality holds then $D' \cap \{v_1,v_2\} = \emptyset$ and there exists $j \in \{1,2\}$ such that $u_j \notin D'$.
\item[(ii)] For any $v \in V_3$, $\vert D' \cap V(G_v) \vert \geq 5$. Moreover, if equality holds then $D' \cap \{v_1,v_2,v_3\} = \emptyset$ and there exists $j \in \{1,2,3\}$ such that $D' \cap \{u_j,v_j,w_j\} = \emptyset$. 
\end{itemize} 
\end{observation}

(i) Clearly, $D' \cap \{v_1,u_1,a_1\} \neq \emptyset$ and $D' \cap \{c_1,u_2,v_2\} \neq \emptyset$ as $u_1$ and $u_2$ must be dominated. Thus, $\vert D' \cap V(G_v) \vert \geq 2$. Now, suppose that $D' \cap \{v_1,v_2\} \neq \emptyset$ say $v_1 \in D'$ without loss of generality. Then $D' \cap \{u_1,a_1,b_1\} \neq \emptyset$ as $a_1$ must be dominated which implies that $\vert D' \cap V(G_v) \vert \geq 3$ (recall that $D' \cap \{c_1,u_2,v_2\} \neq \emptyset$). Similarly, if both $u_1$ and $u_2$ belong to $D'$, then $\vert D' \cap V(G_v) \vert \geq 3$ as $D' \cap \{a_1,b_1,c_1\} \neq \emptyset$ ($b_1$ would otherwise not be dominated).\\

(ii) Clearly, for any $i \in \{1,2,3\}$, $D' \cap \{a_i,b_i,c_i\} \neq \emptyset$ as $b_i$ must be dominated. Now, if there exists $j \in \{1,2,3\}$ such that $D' \cap \{u_j,v_j,w_j\} = \emptyset$, say $j= 1$ without loss of generality, then $a_1, c_3 \in D'$ (one of $u_1$ and $w_1$ would otherwise not be dominated). But then, $D' \cap \{b_1,c_1,w_2\} \neq \emptyset$ as $c_1$ must be dominated, and $D' \cap \{a_3,b_3,u_3\} \neq \emptyset$ as $a_3$ must be dominated; and so, $\vert D' \cap V(G_v) \vert \geq 5$ (recall that $D' \cap \{a_2,b_2,c_2\} \neq  \emptyset$). Otherwise, for any $j \in \{1,2,3\}$, $D' \cap \{u_j,v_j,w_j\} \neq \emptyset$ which implies that $\vert D' \cap V(G_v) \vert \geq 6$.

Now suppose that $D' \cap \{v_1,v_2,v_3\} \neq \emptyset$, say $v_1 \in D'$ without loss of generality. If there exists $j \neq 1$ such that $D' \cap \{u_j,v_j,w_j\} = \emptyset$, say $j = 2$ without loss of generality, then $c_1,a_2 \in D'$ (one of $u_2$ and $w_2$ would otherwise not be dominated). But then, $D' \cap \{a_1,b_1,u_1\} \neq \emptyset$ as $a_1$ should be dominated, and $D' \cap \{b_2,c_2,w_3\} \neq \emptyset$ as $c_2$ must be dominated. Since $D' \cap \{a_3,b_3,c_3\} \neq \emptyset$, it then follows that $\vert D' \cap V(G_v) \vert \geq 6$. Otherwise, $D' \cap \{u_j,v_j,w_j\} \neq \emptyset$ for any $j \in \{1,2,3\}$ and so, $\vert D' \cap V(G_v) \vert \geq 6$ (recall that $D' \cap \{a_i,b_i,c_i\} \neq \emptyset$ for any $i \in \{1,2,3\}$). $\diamond$ 

\begin{observation}
\label{obs:min}
If $D'$ is a minimum dominating set of $G'_{\Phi}$, then $\vert D' \cap V(G_v) \vert \leq 3$ for any $v \in V_2$ and $\vert D' \cap V(G_v) \vert \leq 6$ for any $v \in V_3$.
\end{observation}   

Indeed, if $v \in V_2$ then $\{v_1,b_1,v_2\}$ is a dominating set of $V(G_v)$; and if $v \in V_3$, then $\{v_1,v_2,v_3,b_1,b_2,b_3\}$ is a dominating set of $V(G_v)$. $\diamond$ \\

Now, consider a minimum dominating set $D'$ of $G'_{\Phi}$ and let $D_3 = \{v \in V_3: \vert D' \cap V(G_v) \vert = 6\}$ and $D_2 = \{v \in V_2 : \vert D' \cap V(G_v) \vert = 3\}$. We claim that $D = D_3 \cup D_2$ is a dominating set of $G_{\Phi}$. Indeed, consider a vertex $v \in V \setminus D$. We distinguish two cases depending on whether $v \in V_2$ of $v \in V_3$.\\

\noindent
\textbf{Case 1.} $v \in V_2$. Then $\vert D' \cap V(G_v) \vert = 2$ by construction, which by Observation \ref{obs:size2}(i) implies that there exists $j \in \{1,2\}$ such that $D' \cap \{v_j,u_j\} = \emptyset$ , say $j = 1$ without loss of generality. Since $v_1$ must be dominated, $v_1$ must then have a neighbor $x_i$ belonging to $D'$, for some vertex $x$ adjacent to $v$ in $G_{\Phi}$. But then, it follows from Observation \ref{obs:size2} that $\vert D' \cap V(G_x) \vert > 2$ if $x \in V_2$, and $\vert D' \cap V(G_x) \vert > 5$ if $x \in V_3$ (indeed, $x_i \in D'$); thus, $x \in D$.\\

\noindent
\textbf{Case 2.} $v \in V_3$. Then $\vert D' \cap V(G_v) \vert = 5$ by construction, which by Observation \ref{obs:size2}(ii) implies that there exists $j \in \{1,2,3\}$ such that $D' \cap \{u_j,v_j,w_j\} = \emptyset$, say $j = 1$ without loss of generality. Since $v_1$ must be dominated, $v_1$ must then have a neighbor $x_i$ belonging to $D'$, for some vertex $x$ adjacent to $v$ in $G_{\Phi}$. But then, it follows from Observation \ref{obs:size2} that $\vert D' \cap V(G_x) \vert > 2$ if $x \in V_2$, and $\vert D' \cap V(G_x) \vert > 5$ if $x \in V_3$ (indeed, $x_i \in D'$); thus, $x \in D$.\\

Hence, $D$ is a dominating set of $G_{\Phi}$. Moreover, it follows from Observations \ref{obs:size2} and \ref{obs:min} that $\vert D' \vert = 6 \vert D_3 \vert + 5 \vert V_3 \setminus D_3 \vert + 3 \vert D_2 \vert + 2 \vert V_2 \setminus D_2 \vert = \vert D \vert+ 5 \vert V_3 \vert + 2 \vert V_2 \vert$. Thus, $\gamma (G'_{\Phi}) = \vert D' \vert \geq \gamma (G_{\Phi}) + 5 \vert V_3 \vert + 2 \vert V_2 \vert$ and so, $\gamma (G'_{\Phi}) = \gamma (G_{\Phi}) + 5 \vert V_3 \vert + 2 \vert V_2 \vert$. Finally note that this implies that the constructed dominated set $D$ is in fact minimum.
\end{claimproof}

We next show that $G_{\Phi}$ is a \yes-instance for {\sc All Efficient MD} if and only if $G'_{\Phi}$ is a \yes-instance for {\sc All Independent MD}. Since $\Phi$ is satisfiable if and only if $G_{\Phi}$ is a \yes-instance for {\sc All Efficient MD}, as shown in the proof of Lemma \ref{lemma:efficient}, this would conclude the proof.\\

Assume first that $G_{\Phi}$ is a \yes-instance for {\sc All Efficient MD} and suppose to the contrary that $G'_{\Phi}$ is a \no-instance for {\sc All Independent MD} that is, $G'_{\Phi}$ has a minimum dominating set $D'$ which is not independent. Denote by $D$ the minimum dominating set of $G_{\Phi}$ constructed from $D'$ according to the proof of Claim \ref{clm:gammas}. Let us show that $D$ is not efficient. Consider two adjacent vertices $a,b \in D'$. If $a$ and $b$ belong to gadgets $G_x$ and $G_v$ respectively, for two adjacent vertices $x$ and $v$ in $G_{\Phi}$, that is, $a$ is of the form $x_i$ and $b$ is of the form $v_j$, then by Observation \ref{obs:size2} $x,v \in D$ and so, $D$ is not efficient. Thus, it must be that $a$ and $b$ both belong the same gadget $G_v$, for some $v \in V_2 \cup V_3$. We distinguish cases depending on whether $v \in V_2$ or $v \in V_3$.\\

\noindent
\textbf{Case 1.} $v \in V_2$. Suppose that $\vert D' \cap V(G_v) \vert = 2$. Then by Observation \ref{obs:size2}(i), $D' \cap \{v_1,v_2\} = \emptyset$ and there exists $j \in \{1,2\}$ such that $u_j \notin D'$, say $u_1 \notin D'$ without loss of generality. Then, necessarily $a_1 \in D'$ ($u_1$ would otherwise not be dominated) and so, $b_1 \in D'$ as $D' \cap V(G_v)$ contains an edge and $\vert D' \cap V(G_v) \vert = 2$ by assumption; but then, $u_2$ is not dominated. Thus, $\vert D' \cap V(G_v) \vert \geq 3$ and we conclude by Observation \ref{obs:min} that in fact, equality holds. Note that consequently, $v \in D$. We claim that then, $\vert D' \cap \{v_1,v_2\} \vert \leq 1$. Indeed, if both $v_1$ and $v_2$ belong to $D'$, then $b_1 \in D'$ (since $\vert D' \cap V(G_v) \vert = 3$, $D'$ would otherwise not be dominating) which contradicts that fact that $D' \cap V(G_v)$ contains an edge. Thus, $\vert D' \cap \{v_1,v_2\} \vert \leq 1$ and we may assume without loss of generality that $v_2 \notin D'$. Let $x_i \neq u_2$ be the other neighbor of $v_2$ in $G'_{\Phi}$, where $x$ is a neigbhor of $v$ in $G_{\Phi}$. 

Suppose first that $x \in V_2$. Then, $\vert D' \cap V(G_x) \vert = 2$ for otherwise $x$ would belong to $D$ and so, $D$ would contain the edge $vx$. It then follows from Observation \ref{obs:size2}(i) that there exists $j \in \{1,2\}$ such that $D' \cap \{x_j,y_j\} = \emptyset$, where $y_j$ is the neighbor of $x_j$ in $V(G_x)$. We claim that $j \neq i$; indeed, if $j = i$, since $v_2,x_i,y_i \notin D'$, $x_i$ would not be dominated. But then, $x_j$ must have a neighbor $t_k  \neq y_j$, for some vertex $t$ adjacent to $x$ in $G_{\Phi}$, which belongs to $D'$; it then follows from Observation \ref{obs:size2} and the construction of $D$ that $t \in D$ and so, $x$ has two neighbors in $D$, namely $v$ and $t$, a contradiction.

Second, suppose that $x \in V_3$. Then, $\vert D' \cap V(G_x) \vert = 5$ for otherwise $x$ would belong to $D$ and so, $D$ would contain the edge $vx$. It then follows from Observation \ref{obs:size2}(ii) that there exists $j \in \{1,2,3\}$ such that $D' \cap \{x_j,y_j,z_j\} = \emptyset$, where $y_j$ and $z_j$ are the two neighbors of $x_j$ in $V(G_x)$. We claim that $j \neq i$; indeed, if $j = i$, since $v_2,x_i,y_i,z_i \notin D'$, $x_i$ would not be dominated. But then, $x_j$ must have a neighbor $t_k \neq y_j,z_j$, for some vertex $t$ adjacent to $x$ in $G_{\Phi}$, which belongs to $D'$; it then follows from Observation \ref{obs:size2} and the construction of $D$ that $t \in D$ and so, $x$ has two neighbors in $D$, namely $v$ and $t$, a contradiction.\\

\noindent
\textbf{Case 2.} $v \in V_3$. Suppose that $\vert D' \cap V(G_v) \vert = 5$. Then, by Observation \ref{obs:size2}(ii), $D' \cap \{v_1,v_2,v_3\} = \emptyset$ and there exists $j \in \{1,2,3\}$ such that $D' \cap \{u_j,v_j,w_j\} = \emptyset$, say $j= 1$ without loss of generality. Then, $a_1,c_3 \in D'$ (one of $u_1$ and $w_1$ would otherwise not be dominated), $D' \cap \{c_1,w_2,u_2\} \neq \emptyset$ ($w_2$ would otherwise not be dominated), $D' \cap \{a_3,u_3,w_3\} \neq \emptyset$ ($u_3$ would otherwise not be dominated) and $D' \cap \{a_2,b_2,c_2\} \neq \emptyset$ ($b_2$ would otherwise not be dominated); in particular, $b_1,b_3 \notin D'$ as $\vert D' \cap V(G_v) \vert = 5$ by assumption. Since $D' \cap V(G_v)$ contains an edge, it follows that either $u_2,a_2 \in D'$ or $c_2,w_3 \in D'$; but then, either $c_1$ or $a_3$ is not dominated, a contradiction. Thus, $\vert D' \cap V(G_v) \vert \geq 6$ and we conclude by Observation \ref{obs:min} that in fact, equality holds. Note that consequently, $v \in D$. It follows that $\{v_1,v_2,v_3\} \not\subset D'$ for otherwise $D' \cap V(G_v) = \{v_1,v_2,v_3,b_1,b_2,b_3\}$ and so, $D' \cap V(G_v)$ contains no edge. Thus, we may asssume without loss of generality that $v_1 \notin D'$. Denoting by $x_i \neq u_1,w_1$ the third neighbor of $v_1$, where $x$ is a neighbor of $v$ in $G_{\Phi}$, we then proceed as in the previous case to conclude that $x$ has two neighbors in $D$.\\

Thus, $D$ is not efficient, which contradicts the fact that $G_{\Phi}$ is a \yes-instance for {\sc All Efficient MD}. Hence, every minimum dominating set of $G'_{\Phi}$ is independent i.e., $G'_{\Phi}$ is a \yes-instance for {\sc All Independent MD}.\\

Conversely, assume that $G'_{\Phi}$ is a \yes-instance for {\sc All Independent MD} and suppose to the contrary that $G_{\Phi}$ is a \no-instance for {\sc All Efficient MD} that is, $G_{\Phi}$ has a minimum dominating set $D$ which is not efficient. Let us show that $D$ either contains an edge or can be transformed into a minimum dominating set of $G_{\Phi}$ containing an edge. Since any minimum dominating of $G'_{\Phi}$ constructed according to the proof of Claim~\ref{clm:gammas} from a minimum dominating set of $G_{\Phi}$ containing an edge, also contains an edge, this would lead to a contradiction and thus conclude the proof.

Suppose that $D$ contains no edge. Since $D$ is not efficient, there must then exist a vertex $v \in V \setminus D$ such that $v$ has two neighbors in $D$. We distinguish cases depending on which type of vertex $v$ is.\\

\noindent
\textbf{Case 1.} \textit{$v$ is a variable vertex.} Suppose that $v = x_1$ in some clause gadget $G_c$, where $c \in C$ contains variables $x_1$, $x_2$ and $x_3$, and assume without loss of generality that $x_1$ is adjacent to $F_{x_1}^1$. By assumption, $F_{x_1}^1,l_{\{x_1\}} \in D$ which implies that $D \cap \{l_{\{x_2\}},l_{\{x_3\}},T_{x_1}^1,u_{x_1}^2\} = \emptyset$ ($D$ would otherwise contain an edge). We may then assume that $F_{x_2}^i$ and $F_{x_3}^j$, where $F_{x_2}^ix_2, F_{x_3}^jx_3 \in E(G_{\Phi})$, belong to $D$; indeed, since $x_2$ (resp. $x_3$) must be dominated, $D \cap \{F_{x_2}^i,x_2\} \neq \emptyset$ (resp. $D \cap \{F_{x_3}^j,x_3\} \neq \emptyset$) and since $l_{\{x_1\}} \in D$, $(D \setminus \{x_2\}) \cup \{F_{x_2}^i\}$ (resp. $(D \setminus \{x_3\}) \cup \{F_{x_3}^j\}$) remains dominating. We may then assume that $T_{x_2}^i,T_{x_3}^j \notin D$ for otherwise $D$ would contain an edge. It follows that $c \in D$ ($c$ would otherwise not be dominated); but then, it suffices to consider $(D \setminus \{c\}) \cup \{T_{x_1}^1\}$ to obtain a minimum dominating set of $G_{\Phi}$ containing an edge.\\

\noindent
\textbf{Case 2.} \textit{$v = u_x^i$ for some variable $x \in X$ and $i \in \{1,2,3\}$.} Assume without loss of generality that $i = 1$. Then $T_x^1,F_x^3 \in D$ by assumption, which implies that $F_x^1,T_x^3 \notin D$ ($D$ would otherwise contain an edge). But then, $\vert D \cap \{u_x^2,F_x^2,T_x^2,u_x^3\} \vert \geq 2$ as $u_x^2$ and $u_x^3$ must be dominated; and so, $(D \setminus \{u_x^3,F_x^2,T_x^2,u_x^2\}) \cup \{F_x^2,T_x^2\}$ is a dominating set of $G_{\Phi}$ of size at most that of $D$ which contains an edge.\\

\noindent
\textbf{Case 3.} \textit{$v$ is a clause vertex.} Suppose that $v = c$ for some clause $c \in C$ containing variables $x_1$, $x_2$ and $x_3$, and assume without loss of generality that $c$ is adjacent to $T_{x_i}^1$ for any $i \in \{1,2,3\}$. By assumption $c$ has two neighbors in $D$, say $T_{x_1}^1$ and $T_{x_2}^1$ without loss of generality. Since $D$ contains no edge, it follows that $F_{x_1}^1,F_{x_2}^1 \notin D$; but then, $\vert D \cap \{x_1,x_2,l_{\{x_1\}},l_{\{x_2\}}\} \vert \geq 2$ (one of $x_1$ and $x_2$ would otherwise not be dominated) and so, $(D \setminus \{x_1,x_2,l_{\{x_1\}},l_{\{x_2\}}\}) \cup \{l_{\{x_1\}},l_{\{x_2\}}\}$ is a dominating set of $G_{\Phi}$ of size at most that of $D$ which contains an edge.\\

\noindent
\textbf{Case 4.} \textit{$v \in V(K_c)$ for some clause $c \in C$.} Denote by $x_1$, $x_2$ and $x_3$ the variables contained in $c$ and assume without loss of generality that $v = l_{\{x_1\}}$. Since $l_{\{x_1\}}$ has two neighbors in $D$ and $D$ contains no edge, necessarily $x_1 \in D$. Now assume without loss of generality that $x_1$ is adjacent to $F_{x_1}^1$ (note that by construction, $c$ is then adjacent to $T_{x_1}^1$). Then, $F_{x_1}^1 \notin D$ ($D$ would otherwise contain an edge) and $T_{x_1}^1,u_{x_1}^2 \notin D$  for otherwise $(D \setminus \{x_1\}) \cup \{F_{x_1}^1\}$ would be a minimum dominating set of $G_{\Phi}$ containing an edge (recall that by assumption, $D \cap V(K_c) \neq \emptyset$). It follows that $T_{x_1}^2 \in D$ ($u_{x_1}^2$ would otherwise not be dominated) and so, $F_{x_1}^2 \notin D$ as $D$ contains no edge. It follows that $\vert D \cap \{u_{x_1}^1,F_{x_1}^3,T_{x_1}^3,u_{x_1}^3\} \vert \geq 2$ as $u_{x_1}^1$ and $u_{x_1}^3$ must be dominated. Now if $c$ belongs to $D$, then $(D \setminus \{u_{x_1}^1,F_{x_1}^3,T_{x_1}^3,u_{x_1}^3\}) \cup \{F_{x_1}^3,T_{x_1}^3\}$ is a dominating set of $G_{\Phi}$ of size at most that of $D$ which contains an edge. Thus, we may assume that $c \notin D$ which implies that $u_{x_1}^1 \in D$ ($T_{x_1}^1$ would otherwise not be dominated) and that there exists $j \in \{2,3\}$ such that $T_{x_j}^i \in D$ with $cT_{x_j}^i \in E(G_{\Phi})$ ($c$ would otherwise not be dominated). Now, since $u_{x_1}^3$ must be dominated and $F_{x_1}^2 \notin D$, it follows that $D \cap \{u_{x_1}^3,T_{x_1}^3\} \neq \emptyset$ and we may assume that in fact $T_{x_1}^3 \in D$ (recall that $T_{x_1}^2 \in D$ and so, $F_{x_1}^2$ is dominated). But then, by considering the minimum dominating set $(D \setminus \{u_{x_1}^1\}) \cup \{T_{x_1}^1\}$, we fall back into Case 3 as $c$ is then dominated by both $T_{x_1}^1$ and $T_{x_j}^i$.\\

\noindent
\textbf{Case 5.} \textit{$v$ is a true vertex.} Assume without loss of generality that $v = T_x^1$ for some variable $x \in X$. Suppose first that $u_x^1 \in D$. Then since $D$ contains no edge, $F_x^3 \notin D$; furthermore, denoting by $t \neq u_x^1,T_x^3$ the variable vertex adjacent to $F_x^3$, we also have $t \notin D$ for otherwise $(D \setminus \{u_x^1\}) \cup \{F_x^3\}$ would be a minimum dominating set containing an edge (recall that $T_x^1$ has two neighbors in $D$ by assumption). But then, since $t$ must be dominated, it follows that the second neighbor of $t$ must belong to $D$; and so, by considering the minimum dominating set  $(D \setminus \{u_x^1\}) \cup \{F_x^3\}$, we fall back into Case 1 as the variable vertex $t$ is then dominated by two vertices. Thus, we may assume that $u_x^1 \notin D$ which implies that $F_x^1,c \in D$, where $c$ is the clause vertex adjacent to $T_x^1$. Now, denote by $x_1 = x$, $x_2$ and $x_3$ the variables contained in $c$ (note that by construction, $x_1$ is then adjacent to $F_{x_1}^1$). Then, $x_1 \notin D$ ($D$ would otherwise contain the edge $F_{x_1}^1x_1$) and we may assume that $l_{\{x_1\}} \notin D$ (we otherwise fall back into Case 1 as $x_1$ would then have two neighbors in $D$). It follows that $D \cap V(K_c) \neq \emptyset$ ($l_{\{x_1\}}$ would otherwise not be dominated) and since $D$ contains no edge, in fact $\vert D \cap V(K_c) \vert = 1$, say $l_{\{x_2\}} \in D$ without loss of generality. Then, $x_2 \notin D$ as $D$ contains no edge and we may assume that $F_{x_2}^j \notin D$, where $F_{x_2}^j$ is the false vertex adjacent to $x_2$, for otherwise we fall back into Case 1. In the following, we assume without loss of generality that $j = 1$, that is, $x_2$ is adjacent to $F_{x_2}^1$ (note that by construction, $c$ is then adjacent to $T_{x_2}^1$). Now, since the clause vertex $c$ belongs to $D$ by assumption, it follows that $T_{x_2}^1 \notin D$ ($D$ would otherwise contain the edge $cT_{x_2}^1$); and as shown previously, we may assume that $u_{x_2}^1 \notin D$ (indeed, $T_{x_2}^1$ would otherwise have two neighbors in $D$, namely $c$ and $u_{x_2}^1$, but this case has already been dealt with). Then, since $u_{x_2}^1$ and $F_{x_2}^1$ must be dominated, necessarily $F_{x_2}^3$ and $u_{x_2}^2$ belong to $D$ (recall that $D \cap \{x_2,F_{x_2}^1,T_{x_2}^1,u_{x_2}^1 \} = \emptyset$) which implies that $T_{x_2}^3,T_{x_2}^2 \notin D$ ($D$ would otherwise contain an edge). Now since $u_{x_2}^3$ must be dominated, $D \cap \{u_{x_2}^3,F_{x_2}^2\} \neq \emptyset$ and we may assume without loss of generality that in fact, $F_{x_2}^2 \in D$. But then, by considering the minimum dominating set $(D \setminus \{u_{x_2}^2\}) \cup \{F_{x_2}^1\}$, we fall back into Case~1 as $x_2$ is then dominated by two vertices.\\

\noindent
\textbf{Case 6.} \textit{$v$ is a false vertex.} Assume without loss of generality that $v = F_{x_1}^1$ for some variable $x_1 \in X$ and let $c \in C$ be the clause whose corresponding clause vertex is adjacent to $T_{x_1}^1$. Denote by $x_2$ and $x_3$ the two other variables contained in $c$. Suppose first that $x_1 \in D$. Then, we may assume that $D \cap V(K_c) = \emptyset$ for otherwise either $D$ contains an edge (if $l_{\{x_1\}} \in D$) or we fall back into Case 4 ($l_{\{x_1\}}$ would indeed have two neighbors in $D$). Since every vertex of $K_c$ must be dominated, it then follows that $x_2,x_3 \in D$; but then, by considering the minimum dominating set $(D \setminus \{x_1\}) \cup \{l_{\{x_1\}}\}$ (recall that $F_{x_1}^1$ has two neighbors in $D$ by assumption), we fall back into Case 4 as $l_{\{x_2\}}$ is then dominated by two vertices. Thus, we may assume that $x_1 \notin D$ which implies that $T_{x_1}^1,u_{x_1}^2 \in D$ and $T_{x_1}^2,u_{x_1}^1 \notin D$ as $D$ contains no edge. Now, denote by $c'$ the clause vertex adjacent to $T_{x_1}^2$. Then, we may assume that $c' \notin D$ for otherwise we fall back into Case 5 ($T_{x_1}^2$ would indeed have two neighbors in $D$); but then, there must exist a true vertex, different from $T_{x_1}^2$, adjacent to $c'$ and belonging to $D$ ($c'$ would otherwise not be dominated) and by considering the minimum dominating set $(D \setminus \{u_{x_1}^2\}) \cup \{T_{x_1}^2\}$, we then fall back into Case 3 ($c'$ would indeed be dominated by two vertices).\\

Consequently, $G_{\Phi}$ has a minimum dominating set which is not independent which implies that $G'_{\Phi}$ also has a minimum dominating set which is not independent, a contradiction which concludes the proof.  
\end{proof}

Theorem \ref{thm:clawfree} now easily follows from Theorem \ref{thm:indepmd2} and Fact \ref{obs:equi}.

%=============================================================================================

\section{The proof of Theorem \ref{thm:2p3free}}

\setcounter{observation}{0}
\setcounter{Claim}{0}

In this section, we show that \contracd{} is $\mathsf{coNP}$-hard when restricted to $2P_3$-free graphs. To this end, we prove the following.

\begin{theorem}
\label{thm:indp7free}
{\sc All Independent MD} is $\mathsf{NP}$-hard when restricted to $2P_3$-free graphs.
\end{theorem}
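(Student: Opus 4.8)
The plan is to mirror the two-step strategy used for the claw-free case, but now design gadgets that avoid inducing long induced paths rather than claws. The overall approach is to reduce from the same problem, \textsc{Positive Exactly 3-Bounded 1-In-3 3-Sat}, and to engineer a $P_7$-free graph $G_\Phi''$ whose minimum dominating sets are all independent precisely when $\Phi$ is satisfiable. The key design principle is that $P_7$-freeness forces the graph to have small diameter locally; the natural way to achieve this while still encoding variables and clauses is to make each gadget a clique (or a union of cliques with a dominating apex-type structure) so that no long induced path can weave through several gadgets. First I would introduce, for each variable $x$ and for each clause $c$, a clique-based gadget, and connect variable gadgets to clause gadgets through a bounded number of ''interface'' vertices, ensuring that any induced path is trapped within a constant-size neighborhood of cliques.

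The central combinatorial step is to establish a size-counting observation analogous to Observation~\ref{obs:size}, giving a lower bound on $|D \cap V(G_x)|$ and $|D \cap V(G_c)|$ for any dominating set $D$, together with a characterization (analogous to Observation~\ref{obs:mingx}) of what a minimum dominating set restricted to each gadget must look like. I expect to prove an analogue of Claim~\ref{clm:phisat}, namely that $\gamma(G_\Phi'')$ attains its lower bound if and only if $\Phi$ is satisfiable, by reading off a truth assignment from which ''true'' or ''false'' portion of each variable gadget is hit by $D$. Then, as in Claim~\ref{clm:eff} and the core of Theorem~\ref{thm:indepmd2}, I would show that the presence of an edge inside some minimum dominating set can be detected and corresponds exactly to an over-dominated (non-efficient) configuration, which occurs if and only if some clause is not satisfied in the 1-in-3 sense. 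The equivalence ''every minimum dominating set is independent $\iff$ $\Phi$ is satisfiable'' then follows, and $\mathsf{NP}$-hardness of \textsc{All Independent MD} on $P_7$-free graphs is immediate.

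The two technical obligations that must be verified throughout are: first, that the constructed graph really is $P_7$-free, which I would check by a structural case analysis showing every induced path visits at most a bounded number of gadgets and within each gadget (being clique-like) contributes at most two or three vertices to any induced path; and second, that the gadgets enforce the 1-in-3 semantics rather than mere satisfiability, so that over-domination (an edge inside $D$) is unavoidable exactly when a clause fails to receive exactly one true literal. The hardest part will be reconciling these two requirements simultaneously: making gadgets dense enough (clique-heavy) to guarantee $P_7$-freeness tends to create many short dominating sets and thus many opportunities for edges inside $D$, so the delicate work is tuning the interconnections so that the \emph{minimum} dominating sets are forced into the independent, efficient-like configurations exactly when $\Phi$ is satisfiable. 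I would expect the proof to consist largely of a careful case analysis over the type of vertex that is over-dominated, paralleling the six-case argument in the proof of Theorem~\ref{thm:indepmd2}, but with the cases simplified by the clique structure of the $P_7$-free gadgets.
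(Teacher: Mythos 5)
There is a genuine gap: your proposal never actually exhibits the construction. Everything that makes the theorem true --- the specific variable and clause gadgets, the interface edges, the lower bound on $\gamma$, and the verification of $P_7$-freeness --- is deferred to ``I would introduce \ldots clique-based gadgets'' and ``the delicate work is tuning the interconnections.'' You yourself identify the central tension (clique-heavy gadgets create many minimum dominating sets and hence many chances for an edge inside $D$) and then leave it unresolved; but resolving exactly that tension \emph{is} the proof. As written, the argument cannot be checked because there is no graph to check it on.

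Beyond the missing construction, the route you sketch is misdirected in two ways relative to what is actually needed. First, you import the \textsc{All Efficient MD} detour and the 1-in-3 semantics from the claw-free proof, but for $P_7$-free graphs the target notion is independence of minimum dominating sets directly, not efficiency; a clause may be satisfied by two or three literals without creating any edge inside $D$, so there is no reason to force exactly one true literal per clause, and enforcing it requires bulkier clause gadgets (of the $K_c$-with-pendants type) that work against $P_7$-freeness. Second, plain \textsc{3-Sat} suffices. The paper's construction is far leaner than what you outline: each variable becomes a triangle on $\{x,\bar{x},u_x\}$, each clause becomes a single vertex joined to the literal vertices it contains, and all clause vertices are made into one clique $K$. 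Two short claims then give $\Phi$ satisfiable $\Leftrightarrow \gamma(G_\Phi)=\vert X\vert \Leftrightarrow$ every minimum dominating set is independent (the key exchange step being that a clause vertex in $D$ can be swapped for a literal vertex, creating an edge inside a variable triangle), and $P_7$-freeness is immediate because every induced path meets at most two vertices of $K$, at most two vertices of any triangle, and at most two distinct variable gadgets. Your plan, even if completed, would reprove this with substantially more machinery than the statement requires.
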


\begin{proof}
We reduce from {\sc 3-Sat}: given an instance $\Phi$ of this problem, with variable set $X$ and clause set $C$, we construct an equivalent instance of {\sc All Independent MD} as follows. For any variable $x \in X$, we introduce a copy of $C_3$, which we denote by $G_x$, with one distinguished \textit{positive literal vertex} $x$ and one distinguished \textit{negative literal vertex} $\bar{x}$; in the following, we denote by $u_x$ the third vertex in $G_x$. For any clause $c \in C$, we introduce a \textit{clause vertex} $c$; we then add an edge between $c$ and the (positive or negative) literal vertices whose corresponding literal occurs in $c$. Finally, we add an edge between any two clause vertices so that the set of clause vertices induces a clique denoted by $K$ in the following. We denote by $G_{\Phi}$ the resulting graph.

\begin{observation}
\label{obs:size4}
For any dominating set $D$ of $G_{\Phi}$ and any variable $x \in X$, $\vert D \cap V(G_x) \vert \geq 1$. In particular, $\gamma (G_{\Phi}) \geq \vert X \vert$.
\end{observation}

%Indeed, this immediately follows from the fact that $u_x$ must be dominated and its neighborhood is included in $G_x$. $\diamond$

\begin{Claim}
\label{clm:phisat4}
$\Phi$ is satisfiable if and only if $\gamma (G_{\Phi}) = \vert X \vert$.
\end{Claim}

\begin{claimproof}
Assume that $\Phi$ is satisfiable and consider a truth assignment satisfying $\Phi$. We construct a dominating set $D$ of $G_{\Phi}$ as follows. For any variable $x \in X$, if $x$ is true, add the positive literal vertex $x$ to $D$; otherwise, add the negative variable vertex $\bar{x}$ to $D$. Clearly, $D$ is dominating and we conclude by Observation \ref{obs:size4} that $\gamma (G_{\Phi}) = \vert X \vert$.

Conversely, assume that $\gamma (G_{\Phi}) = \vert X \vert$ and consider a minimum dominating set $D$ of $G_{\Phi}$. Then by Observation \ref{obs:size4}, $\vert D \cap V(G_x) \vert = 1$ for any $x \in X$. It follows that $D \cap K = \emptyset$ and so, every clause vertex must be adjacent to some (positive or negative) literal vertex belonging to $D$. We thus construct a truth assignment satisfying $\Phi$ as follows: for any variable $x \in X$, if the positive literal vertex $x$ belongs to $D$, set $x$ to true; otherwise, set $x$ to false. 
\end{claimproof}

\begin{Claim}
\label{clm:indep2}
$\gamma (G_{\Phi}) = \vert X \vert$ if and only if every minimum dominating set of $G_{\Phi}$ is independent.
\end{Claim}

\begin{claimproof}
Assume that $\gamma (G_{\Phi}) = \vert X \vert$ and consider a minimum dominating set $D$ of $G_{\Phi}$. Then by Observation \ref{obs:size4}, $\vert D \cap V(G_x) \vert = 1$ for any $x \in X$. It follows that $D \cap K = \emptyset$ and since $N[V(G_x)] \cap N[V(G_{x'})] \subset K$ for any two $x,x' \in X$, $D$ is independent.

Conversely, consider a minimum dominating set $D$ of $G_{\Phi}$. Since $D$ is independent, $\vert D \cap V(G_x) \vert \leq 1$ for any $x \in X$ and we conclude by Observation \ref{obs:size4} that in fact, equality holds. Now suppose that there exists $c \in C$, containing variables $x_1$, $x_2$ and $x_3$, such that the corresponding clause vertex $c$ belongs to $D$ (note that since $D$ is independent, $\vert D \cap K \vert \leq 1$). Assume without loss of generality that $x_1$ occurs positively in $c$, that is, $c$ is adjacent to the positive literal vertex $x_1$. Then, $x_1 \notin D$ since $D$ is independent and so, either $u_{x_1} \in D$ or $\bar{x_1} \in D$. In the first case, we immediately obtain that $(D \setminus \{u_{x_1}\}) \cup \{x_1\}$ is a minimum dominating set of $G_{\Phi}$ containing an edge, a contradiction. In the second case, since $c \in D$, any vertex dominated by $\bar{x_1}$ is also dominated by $c$; thus, $(D \setminus \{\bar{x_1}\}) \cup \{x_1\}$ is a minimum dominating set of $G_{\Phi}$ containing an edge, a contradiction. Consequently, $D \cap K = \emptyset$ and so, $\gamma (G_{\Phi}) = \vert D \vert = \vert X \vert$.
\end{claimproof}

Now by combining Claims \ref{clm:phisat4} and \ref{clm:indep2}, we obtain that $\Phi$ is satisfiable if and only if every minimum dominating set of $G_{\Phi}$ is independent, that is, $G_{\Phi}$ is a \yes-instance for {\sc All Independent MD}. There remains to show that $G_{\Phi}$ is $2P_3$-free. To see this, it suffices to observe that any induced $P_3$ of $G_{\Phi}$ contains at least one vertex in the clique $K$. This concludes the proof.
\end{proof}

Theorem \ref{thm:2p3free} now easily follows from Theorem \ref{thm:indp7free} and Fact \ref{obs:equi}.

%==============================================================================================================================

\section{Conclusion}

In this work, we settle a complexity dichotomy for \contracd{} on $H$-free graphs when $H$ is a connected graph. If we do not require $H$ to be connected, there only remains to settle the complexity status of \contracd{} restricted to $H$-free graphs for $H=P_3+qP_2+pK_1$, when $q\geq 1$ and $p\geq 0$.

%==============================================================================================================================

 \bibliographystyle{siam}

  \bibliography{bibliography}

\end{document}